\newtheorem{Theorem}{Theorem}
\newtheorem*{Theorem*}{Theorem}
\newtheorem{Lemma}[Theorem]{Lemma}
\newtheorem{Definition}[Theorem]{Definition}
\newtheorem{Corollary}[Theorem]{Corollary}
\newtheorem{Remark}[Theorem]{Remark}
\newcommand{\up}{u^\prime}
\newcommand{\upp}{u^{\prime\prime}}
\newcommand{\Mp}{M^\prime}
\newcommand{\Mpp}{M^{\prime\prime}}
\newcommand{\dc}{\mathcal{D}}
\newcommand{\dcp}{\mathcal{D}^\prime}
\newcommand{\dcpp}{\mathcal{D}^{\prime\prime}}
\newcommand{\C}{\mathcal{C}}
\newcommand{\Cp}{\mathcal{C}^\prime}
\newcommand{\Cpp}{\mathcal{C}^{\prime\prime}}
\newcommand{\norm}[1]{\left\lVert#1\right\rVert}
\newcommand{\setx}{\ensuremath{\mathcal{X}}}
\newcommand{\sety}{\ensuremath{\mathcal{Y}}}
\newcommand{\setd}{\ensuremath{\mathcal{D}}}
\newcommand{\sete}{\ensuremath{\mathcal{E}}}
\newcommand{\setn}{\ensuremath{\mathcal{N}}}
\newcommand{\setp}{\ensuremath{\mathcal{P}}}
\newcommand{\setf}{\ensuremath{\mathcal{F}}}
\newcommand{\sett}{\ensuremath{\mathcal{T}}}
\newcommand{\sets}{\ensuremath{\mathcal{S}}}
\begin{document}

\title{Joint Identification and Sensing for Discrete Memoryless Channels}

\author{
\IEEEauthorblockN{Wafa Labidi \IEEEauthorrefmark{1}\IEEEauthorrefmark{2}\IEEEauthorrefmark{4}\thanks{\IEEEauthorrefmark{4}BMBF Research Hub 6G-life, Germany}, Yaning Zhao\IEEEauthorrefmark{1}\IEEEauthorrefmark{2}, Christian Deppe\IEEEauthorrefmark{2}\IEEEauthorrefmark{4}, Holger Boche\IEEEauthorrefmark{1}\thanks{\IEEEauthorrefmark{3}Cyber Security in the Age of Large-Scale Adversaries–
Exzellenzcluster, Ruhr-Universit\"at Bochum, Germany}\IEEEauthorrefmark{4}\thanks{\IEEEauthorrefmark{6}{\color{black}{ Munich Center for Quantum Science and Technology (MCQST) }}}\IEEEauthorrefmark{7}\thanks{\IEEEauthorrefmark{7} {\color{black}{Munich Quantum Valley (MQV)}} }\\
Email: wafa.labidi@tum.de, yaning.zhao@tu-bs.de, boche@tum.de, christian.deppe@tu-bs.de}
}

\maketitle

\sloppy

\begin{abstract} In the identification (ID) scheme proposed by Ahlswede and Dueck, the receiver's goal is simply to verify whether a specific message of interest was sent. Unlike Shannon’s transmission codes, which aim for message decoding, ID codes for a Discrete Memoryless Channel (DMC) are far more efficient: their size grows doubly exponentially with the blocklength when randomized encoding is used. This indicates that, when the receiver's objective does not require decoding, the ID paradigm is significantly more efficient than traditional Shannon transmission in terms of both energy consumption and hardware complexity.
Further benefits of ID schemes can be realized by leveraging additional resources such as feedback. In this work, we address the problem of joint ID and channel state estimation over a DMC with independent and identically distributed (i.i.d.) state sequences. State estimation functions as the sensing mechanism of the model. Specifically, the sender transmits an ID message over the DMC while simultaneously estimating the channel state through strictly causal observations of the channel output. Importantly, the random channel state is unknown to both the sender and the receiver.
For this system model, we present a complete characterization of the ID capacity-distortion function.
\end{abstract}


\section{Introduction}
The identification (ID) scheme was suggested by Ahlswede and Dueck \cite{AhlDueck} in 1989, which is conceptually different from the classical message transmission scheme proposed by Shannon \cite{Shannon}. In the classical message transmission, the encoder transmits a message over a noisy channel and at the receiver side, the aim of the decoder is to output an estimation of this message based on the channel observation. However, within the ID paradigm, the encoder sends an ID message (also called identity) over a noisy channel and the decoder aims to check whether a specific ID message of special interest to the receiver has been sent or not. Obviously, the sender has no prior knowledge of this specific ID message the receiver is interested in.  
Ahlswede and Dueck demonstrated that in the theory of ID \cite{AhlDueck}; the size of ID codes for Discrete Memoryless Channels (DMCs) grows doubly exponentially fast with the blocklength, if randomized encoding is used.  If only deterministic encoding is allowed, the number of identities that can be identified over a DMC scales exponentially with the blocklength. Nevertheless, the rate is still more significant than the transmission rate in the exponential scale as shown in \cite{deterministicDMC,IDwithoutRandom}. 

New applications in modern communication demand high reliability and latency requirements including machine-to-machine and human-to-machine systems, digital watermarking \cite{MOULINwatermarking,Ahlswede2021,SteinbergWatermarking}, industry 4.0 \cite{industry4.0} and 6G communication systems \cite{6Gcomm,6G_Book}. The aforementioned requirements are crucial for achieving trustworthiness \cite{6Gandtrustworthiness}. For this purpose, the necessary latency resilience and data security requirements must be embedded in the physical domain. In this situation, the classical Shannon message transmission is limited and an ID scheme can achieve a better scaling behavior in terms of necessary energy and needed hardware components. It has been proved that information-theoretic security can be integrated into the ID scheme without paying an extra price for secrecy \cite{AhlZhang,icassp_paper}. Further gains within the ID paradigm can be achieved by taking advantage of additional  resources such as quantum entanglement, common randomness (CR), and feedback. In contrast to the classical Shannon message transmission, feedback can increase the ID capacity of a DMC \cite{Idfeedback}. Furthermore, it has been shown in \cite{isit_paper} that the ID capacity of Gaussian channels with noiseless feedback is infinite. This holds to both rate definitions $\frac 1n \log M$ (as defined by Shannon for classical transmission) and $\frac 1n \log\log M$ (as defined by Ahlswede and Dueck for ID over DMCs). Interestingly, the authors in \cite{isit_paper} showed that the ID capacity with noiseless feedback remains infinite regardless of the scaling used for the rate, e.g., double exponential, triple exponential, etc. Besides, the resource CR allows a considerable increase in the ID capacity of channels \cite{trafo,part2,Ahlswede2021}. The aforementioned communication scenarios emphasize that the ID capacity has a completely different behavior than Shannon's capacity.

A key technology within $6$G communication systems is jointly designing radio communication and sensor technology \cite{6Gandtrustworthiness}.
This will enable the realization of revolutionary end-user applications \cite{JSC2021}. Joint communication and radar/radio sensing (JCAS) means that sensing and communication are jointly designed by sharing the same bandwidth. Sensing and communication systems are usually designed separately such that resources are dedicated to either sensing or data communications. Joint sensing and communication approach is a solution to overcome the limitations of a separation-based approach. Recent works \cite{sensing1,sensing2,sensing3,sensing4} explored JCAS and showed that this approach can improve spectrum efficiency and minimize hardware costs.
For instance, fundamental limits of joint sensing and communication for a point-to-point channel have been studied in \cite{MariPaper}, where the transmitter wishes to simultaneously send a message to the receiver and sense its channel state via a strictly causal feedback link. Motivated by the drastic effects of feedback on the ID capacity \cite{isit_paper}, this work investigates joint ID and sensing. To the best of our knowledge, the problem of joint ID and sensing has not been treated in the literature yet. We study the problem of joint ID and channel state estimation over a DMC with i.i.d. state sequences. The sender simultaneously sends an ID message over the DMC with a random state and estimates the channel state via a strictly causal channel output. The random channel state is available to neither the sender nor the receiver. We consider the ID capacity-distortion tradeoff as a performance metric. This metric is analogous to the one studied in \cite{performanceMetric} and is defined as the supremum of all ID rates we can achieve such that some distortion constraint on state sensing is fulfilled. 
The model was motivated by the problem of adaptive and sequential
optimization of the beamforming vectors during the initial access
phase of communication \cite{chiu}.
We establish a lower bound on the ID capacity-distortion tradeoff and show that, in our communication setup, sensing can be viewed as an additional resource that increases the ID capacity. 

\textit{Outline:} The remainder of the paper is organized as follows. Section~\ref{sec:Preli} introduces the system models, reviews key definitions related to identification (ID), and presents the main results, including a complete characterization of the ID capacity-distortion function. Section~\ref{sec:mainproof} provides detailed proofs of these main results. In Section~\ref{average}, we explore an alternative, more flexible distortion constraint, namely the average distortion, and establish a lower bound on the corresponding ID capacity-distortion function. Finally, Section~\ref{sec:conclusions} concludes the paper with a discussion of the results and potential directions for future research.

\textit{Notation:}
The distribution of a RV $X$ is denoted by $P_X$; for a finite set $\setx$, we denote the set of probability distributions on $\setx$ by $\mathcal{P}(\setx)$ and by $|\setx|$ the cardinality of $\setx$; if $X$ is a RV with distribution $P_X$, we denote the Shannon entropy of $X$ by $H(P_X)$, by $\mathbb{E}(X)$ the expectation of $X$ and by ${\text{Var}}[X]$ the variance of $X$; if $X$ and $Y$ are two RVs with probability distributions $P_X$ and $P_Y$, the mutual information between $X$ and $Y$ is denoted by 
 $I(X;Y)$; $\setx^c$ denotes the complement of $\setx$; $\setx-\sety$ denotes the difference set;
all logarithms and information quantities are taken to the base $2$.

\section{System Models and Main Results} \label{sec:Preli}
Let a discrete memoryless channel with random state $(\setx\times \sets, W_S(y|x,s), \sety)$ consisting of a finite input alphabet $\setx$, a finite output alphabet $\sety$, a finite state set $\sets$ and a pmf $W(y|x,s)$ on $\sety$, be given. The channel is memoryless, i.e., the probability for a sequence $y^n \in \sety^n$ to be received if the input sequence $x^n \in \setx^n$ was sent and the sequence state is $s^n \in \sets^n$ is given by 
		\begin{equation}
		W_S^n(y^n|x^n,s^n)=\prod_{i=1}^n W_S(y_i|x_i,s_i).
		\end{equation}
		 The state sequence $(S_1,S_2,\ldots,S_n)$ is i.i.d. according to the distribution $P_S$. We assume that the input $X_i$ and the state $S_i$ are statistically independent for all $i\in\{1,2,\ldots,n\}$.
	In our setting depicted in Fig. \ref{fig:System}, we assume that the channel state is  known to neither the sender nor the receiver. 
 
In the sequel, we distinguish three scenarios:
\begin{enumerate}
    \item randomized ID over the state-dependent channel $W_S$ as depicted in Fig. \ref{fig:System},
    \item deterministic or randomized ID over the state-dependent channel $W_S$ in the presence of noiseless feedback between the sender and the receiver as depicted in Fig. \ref{fig:SystemFeedback},
    \item joint deterministic or randomized ID and sensing: the sender wishes to simultaneously send an identity to the receiver and sense the channel state sequence based on the output of the noiseless feedback link as depicted in Fig. \ref{Fig:capcityDistortion}.
\end{enumerate}
\begin{figure}[hbt!]
\centering
\scalebox{1.0}{
\tikzstyle{farbverlauf} = [ top color=white, bottom color=white!80!gray]
\tikzstyle{block} = [draw,top color=white, bottom color=white!80!gray, rectangle, rounded corners,
minimum height=2em, minimum width=2.5em]
\tikzstyle{input} = [coordinate]
\tikzstyle{sum} = [draw, circle,inner sep=0pt, minimum size=2mm,  thick]
\scalebox{1}{
\tikzstyle{arrow}=[draw,->] 
\begin{tikzpicture}[auto, node distance=2cm,>=latex']
\node[] (M) {$i \in \mathcal{N}$};
\node[block,right=.5cm of M] (enc) {Encoder};
\node[block, right=1.7cm of enc] (channel) {DMC};
\node[block,below=.7cm of channel](state){$P_S$};
\node[block, right=1cm of channel] (dec) {Decoder};
\node[right=.5cm of dec] (Mhat) {\begin{tabular}{c} Is ${i}^\prime$ sent? \\ Yes or No? \end{tabular}};
\node[input,right=.5cm of channel] (t1) {};
\node[input,above=1cm of t1] (t2) {};
\draw[-{Latex[length=1.5mm, width=1.5mm]},thick] (M) -- (enc);
\draw[-{Latex[length=1.5mm, width=1.5mm]},thick] (enc) --node[above]{ $X^n$} (channel);
\draw[-{Latex[length=1.5mm, width=1.5mm]},thick] (channel) --node[above]{$Y^n$} (dec);
\draw[-{Latex[length=1.5mm, width=1.5mm]},thick] (dec) -- (Mhat);
\draw[-{Latex[length=1.5mm, width=1.5mm]},thick] (state)--(channel);
\end{tikzpicture}}
}
\caption{Discrete memoryless channel with random state}
\label{fig:System}
\end{figure}
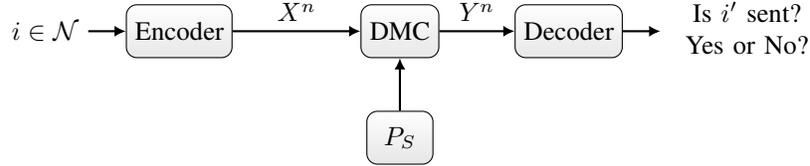
First, we define randomized ID codes for the state-dependent channel defined above.
\begin{Definition}
	An $(n,N,\lambda_1,\lambda_2)$ randomized ID code with $\lambda_1+\lambda_2<1$ for the channel $W_S$ is a family of pairs 
	$\{(Q(\cdot|i),\setd_i(s^n))_{s^n \in \sets^n}, \quad   i=1,\ldots,N\}$ with 
	\begin{equation}
	Q(\cdot|i) \in \mathcal{P}(\setx^n), \ \setd_i(s^n) \in \sety^n,\quad \forall s^n \in \sets^n, \ \forall i=1,\ldots,N,
	\end{equation}
	such that the errors of the first kind and the second kind are bounded as follows.
	\begin{align}
	\sum_{s^n \in \sets^n} P_S^n(s^n) \sum_{x^n \in \setx^n} Q(x^n|i) W_S^n(\setd_i(s^n)^c|x^n,s^n) & \leq \lambda_1,  \forall i,\\
	\sum_{s^n \in \sets^n} P_S^n(s^n) \sum_{x^n \in \setx^n} Q(x^n|i) W_S^n(\setd_j(s^n)|x^n,s^n) & \leq \lambda_2, \forall i\neq j. 
	\end{align}

\end{Definition}
In the following, we define achievable ID rate and ID capacity for our system model.
\begin{Definition}
	\begin{enumerate}
		\item The rate $R$ of a randomized $(n,N,\lambda_1,\lambda_2)$ ID code for the channel $W_S$ is $R=\frac{\log\log(N)}{n}$ bits.
		\item The ID rate $R$ for $W_S$ is said to be achievable if for $\lambda \in (0,\frac{1}{2})$ there exists an $n_0(\lambda)$, such that for all $n\geq n_0(\lambda)$ there exists an $(n,2^{2^{nR}},\lambda,\lambda)$ randomized ID code for $W_S$.
		\item The randomized ID capacity $C_{ID}(W_S)$ of the channel $W_S$ is the supremum of all achievable rates.
	\end{enumerate}
\end{Definition}
The following Theorem characterizes the randomized ID capacity of the state-dependent channel $W_S$ when the state information is known to neither the sender nor the receiver.
\begin{Theorem}
	The randomized ID capacity of the channel $W_S$ is given by
	\begin{equation}
	C_{ID}(W_S)=C(W_S)=\max_{P_X \in \mathcal{P}(\setx)} I(X;Y),
	\end{equation}
	where $C(W_S)$ denotes the Shannon transmission capacity of $W_S$. \label{proposition}
\end{Theorem}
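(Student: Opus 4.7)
The plan is to reduce the state-dependent ID problem to a classical ID problem over an ordinary (state-free) DMC, and then invoke the Ahlswede--Dueck capacity theorem as a black box.

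The key step is to introduce the state-averaged channel
\[
\tilde W(y\mid x) \;:=\; \sum_{s\in\sets} P_S(s)\, W_S(y\mid x,s).
\]
Since the state sequence is i.i.d.\ and independent of the input, its $n$-letter extension factorises as $\tilde W^{n}(y^n\mid x^n) = \prod_{i=1}^{n} \tilde W(y_i\mid x_i)$ and, more importantly,
\[
\tilde W^{n}(y^n\mid x^n) \;=\; \sum_{s^n\in\sets^n} P_S^n(s^n)\, W_S^n(y^n\mid x^n,s^n).
\]
Because neither terminal observes the state, I would argue that one may restrict without loss of generality to decoding sets $\setd_i$ that do not depend on $s^n$: for any decoder of the form $\setd_i(s^n)$, the operationally meaningful quantity is its average over $P_S^n$, and averaging preserves the error bounds. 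With $\setd_i$ chosen to be $s^n$-independent, the first-kind and second-kind error constraints in the ID code definition simplify exactly to
\begin{align*}
\sum_{x^n}Q(x^n|i)\,\tilde W^{n}(\setd_i^{c}\mid x^n) &\le \lambda_1,\qquad \forall\, i,\\
\sum_{x^n}Q(x^n|i)\,\tilde W^{n}(\setd_j \mid x^n)   &\le \lambda_2,\qquad \forall\, i\neq j,
\end{align*}
which are precisely the error constraints of a randomized ID code for the state-free DMC $\tilde W$.

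Once this equivalence is established, the Ahlswede--Dueck theorem for randomized ID over DMCs yields $C_{ID}(W_S)=C_{ID}(\tilde W)=C(\tilde W)=\max_{P_X\in\mathcal{P}(\setx)} I(X;Y)$, where $(X,Y)$ is distributed according to $P_X(x)\tilde W(y\mid x)$. Since $C(\tilde W)$ is by definition the Shannon transmission capacity of $W_S$ when the state is unknown to both terminals, the identity $C_{ID}(W_S)=C(W_S)$ follows.

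The genuine depth of the statement lives entirely inside the Ahlswede--Dueck theorem itself (doubly-exponential achievability via a random-coding construction built on top of a Shannon transmission code, together with the soft converse of Han--Verd\'u). In the present theorem the only original work is the reduction to $\tilde W$, and the main obstacle I foresee is the justification that state-dependent decoding sets $\setd_i(s^n)$ cannot outperform state-independent ones when the state is unobserved by the receiver; this is where the assumption that the state is known to neither side, and the fact that the error criteria are \emph{averaged} over $P_S^n$, must be used carefully.
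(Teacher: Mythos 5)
Your proposal is correct and follows essentially the same route as the paper: reduce to the state-averaged DMC $\tilde W(y|x)=\sum_s P_S(s)W_S(y|x,s)$, whose Shannon capacity is $\max_{P_X}I(X;Y)$, and then invoke the known result that for channels satisfying the strong converse the randomized ID capacity equals the transmission capacity (the paper cites Han's Theorem 6.6.4 for exactly this). The only difference is presentational — you spell out the reduction and flag the (non-)issue of $s^n$-dependent decoding sets, which the paper's two-line proof silently glosses over.
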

\begin{proof}
	The proof of Theorem \ref{proposition} follows from \cite[Theorem 6.6.4]{HanBook} and \cite[eq. (7.2)]{NetworkIT}. Since the channel $W_S$ satisfies the strong converse property \cite{HanBook}, the randomized ID capacity of $W_S$ coincides with its Shannon transmission capacity determined in \cite{NetworkIT}. 
\end{proof}
Now, we consider the second scenario depicted in Fig. \ref{fig:SystemFeedback}. Let further $\bar{Y}^n=(\bar{Y}_1,\ldots,\bar{Y}_n) \in  \sety^n$ denote the output of the noiseless backward (feedback) channel. 
 \begin{equation}
    \bar{Y}_t=Y_t,\quad \forall t \in \{1,\ldots,n\}.
\end{equation}
In the following, we define a deterministic and randomized ID feedback code for the state-dependent channel $W_S$.

\begin{Definition} \label{def:IDf}
	An $(n,N,\lambda_1,\lambda_2)$ deterministic ID feedback code $\left\{(\boldsymbol{f}_i,\setd_i(s^n))_{s^n \in \sets^n},\ i=1,\ldots,N \right\}$ with $\lambda_1+\lambda_2<1$ for the channel $W_S$ is characterized as follows.
	The sender wants to send an ID message $i \in \setn := \{1,\ldots,N\} $ that is encoded by the vector-valued function
	\begin{align} 
	\boldsymbol{f}_i=[f_i^1,f_i^2\ldots,f_i^n], \label{eq: f_function}
	\end{align}
	where $f^1_i  \in \setx$ and for $t \in \{2,\ldots,n\}$, 
	$f_i^t    \colon \sety^{t-1} \longrightarrow \setx$. 
	At $t=1$ the sender sends $f_i^1$. At $t \in \{2,\ldots,n\}$, the sender sends $f^t_i({Y}_1,\ldots,{Y}^{t-1})$.
	The decoding sets $\setd_i(s^n) \subset \sety^n,\ \forall i \in \{1,\ldots,N\}, \text{and }\forall s^n \in \sets^n$ should satisfy the following inequalities.
	\begin{align}
	\sum_{s^n \in \sets^n} P_S^n(s^n) W_S^n(\setd_i(s^n)^c|\boldsymbol{f}_i,s^n) & \leq \lambda_1 \quad  \forall i, \label{eq:error1d}\\
	\sum_{s^n \in \sets^n} P_S^n(s^n) W_S^n(\setd_j(s^n)|\boldsymbol{f}_i,s^n) & \leq \lambda_2 \quad \forall i\neq j. \label{eq:error2d}
	\end{align}
	\label{Def:strategy_f}
\end{Definition}
\begin{Definition} \label{def:RIDf}
	An $(n,N,\lambda_1,\lambda_2)$ randomized ID feedback code 
    \[ \left\{(Q_F(\cdot|i),\setd_i(s^n))_{s^n \in \sets^n},\ i=1,\ldots,N \right\}\] 
    with $\lambda_1+\lambda_2<1$ for the channel $W_S$ is characterized as follows.
	The sender wants to send an ID message $i \in \setn := \{1,\ldots,N\} $ that is encoded by the probability distribution
	\begin{align} 
	Q_F(\cdot|i)\in\setp\left(\setf^n\right), \label{eq: QF_function}
	\end{align}
	where $Q_F(\cdot|i)$ denotes a probability distribution over the set of all n-length functions $\boldsymbol{f}$ as $\setf^n$. The decoding sets $\setd_i(s^n) \subset \sety^n,\ \forall i \in \{1,\ldots,N\}, \text{and }\forall s^n \in \sets^n$ should satisfy the following inequalities.
	\begin{align}
	\sum_{s^n \in \sets^n} P_S^n(s^n) \sum_{\boldsymbol{f}\in\setf^n} Q_F(\boldsymbol{f}|i) W_S^n(\setd_i(s^n)^c|\boldsymbol{f},s^n) & \leq \lambda_1 \quad  \forall i, \label{eq:error1r}\\
	\sum_{s^n \in \sets^n} P_S^n(s^n) \sum_{\boldsymbol{f}\in\setf^n} Q_F(\boldsymbol{f}|i)W_S^n(\setd_j(s^n)|\boldsymbol{f},s^n) & \leq \lambda_2 \quad \forall i\neq j. \label{eq:error2r}
	\end{align}
	\label{Def:strategy_Q}
\end{Definition}
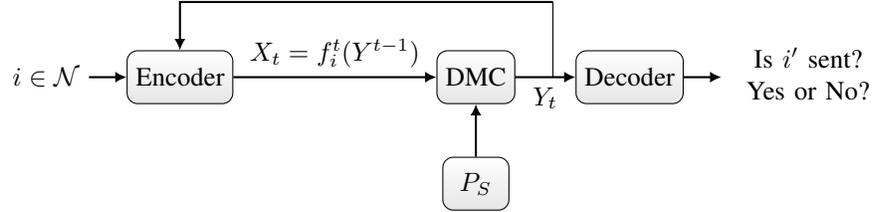
\begin{figure}[hbt!]
\centering
	\tikzstyle{farbverlauf} = [ top color=white, bottom color=white!80!gray]
\tikzstyle{block} = [draw,top color=white, bottom color=white!80!gray, rectangle, rounded corners,
minimum height=2em, minimum width=2.5em]
\tikzstyle{input} = [coordinate]
\tikzstyle{sum} = [draw, circle,inner sep=0pt, minimum size=2mm,  thick]
\scalebox{1}{
\tikzstyle{arrow}=[draw,->] 
\begin{tikzpicture}[auto, node distance=2cm,>=latex']
\node[] (M) {$i \in \mathcal{N}$};
\node[block,right=.5cm of M] (enc) {Encoder};
\node[block, right=2.7cm of enc] (channel) {DMC};
\node[block,below=.7cm of channel](state){$P_S$};
\node[block, right=.8cm of channel] (dec) {Decoder};
\node[right=.5cm of dec] (Mhat) {\begin{tabular}{c} Is ${i}^\prime$ sent? \\ Yes or No? \end{tabular}};
\node[input,right=.5cm of channel] (t1) {};
\node[input,above=1cm of t1] (t2) {};
\draw[-{Latex[length=1.5mm, width=1.5mm]},thick] (M) -- (enc);
\draw[-{Latex[length=1.5mm, width=1.5mm]},thick] (enc) --node[above]{ $X_t=f_i^t(Y^{t-1})$} (channel);
\draw[-{Latex[length=1.5mm, width=1.5mm]},thick] (channel) --node[below]{$Y_t$} (dec);
\draw[-{Latex[length=1.5mm, width=1.5mm]},thick] (dec) -- (Mhat);
\draw[-] (t1) -- (t2);
\draw[-{Latex[length=1.5mm, width=1.5mm]},thick] (t2) -| (enc);
\draw[-{Latex[length=1.5mm, width=1.5mm]},thick] (state)--(channel);
\end{tikzpicture}}
	\caption{Discrete memoryless channel with random state and with noiseless feedback}
 \label{fig:SystemFeedback}
\end{figure}
\begin{Definition}
	\begin{enumerate}
		\item The rate $R$ of a (deterministic/randomized) $(n,N,\lambda_1,\lambda_2)$ ID feedback code for the channel $W_S$ is $R=\frac{\log\log(N)}{n}$ bits.
		\item The (deterministic/randomized) ID feedback rate $R$ for $W_S$ is said to be achievable if for $\lambda \in (0,\frac{1}{2})$ there exists an $n_0(\lambda)$, such that for all $n\geq n_0(\lambda)$ there exists a (deterministic/randomized) $(n,2^{2^{nR}},\lambda,\lambda)$ ID feedback code for $W_S$.
		\item The (deterministic/randomized) ID feedback capacity $C^d_{IDf}(W_S)$/$C^r_{IDf}(W_S)$ of the channel $W_S$ is the supremum of all achievable rates.
	\end{enumerate}
\end{Definition}
It has been demonstrated in \cite{HanBook} that noise increases the ID capacity of the DMC in the case of feedback. Intuitively, noise is considered a source of randomness, i.e., a random experiment whose outcome is provided to the sender and the receiver via the feedback channel. Thus, adding a perfect feedback link enables the realization of a  correlated random experiment between the sender and the receiver. The size of this random experiment can be used to compute the growth of the ID rate. This result has been further emphasized in \cite{isit_paper, identificationwithfeedback}, where it has been shown that the ID capacity of the Gaussian channel with noiseless feedback is infinite. This is because the authors in \cite{isit_paper,identificationwithfeedback} provided a coding scheme that generates infinite common randomness between the sender and the receiver. We want to investigate the effect of feedback on the ID capacity of our system model depicted in Fig. \ref{fig:SystemFeedback}.
Theorem \ref{MainTheorem1} characterizes the ID feedback capacity of the state-dependent channel $W_S$ with noiseless feedback. The proof of Theorem \ref{MainTheorem1} is provided in Section \ref{sec:mainproof}. 

\begin{Theorem}
	If $C(W_S)>0$, then the deterministic ID feedback capacity of $W_S$ is given by
	\begin{equation}
	C_{IDf}^d(W_S)=\max_{x\in \setx} H\left(\mathbb{E}\left[W_{S}(\cdot|x,S)\right]\right).
	\end{equation}
	\label{MainTheorem1}
\end{Theorem}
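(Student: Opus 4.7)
The plan is to reduce the state-dependent feedback model to a standard memoryless channel without state and then invoke the classical Ahlswede--Dueck result on deterministic ID with noiseless feedback. The key observation is that the state sequence is i.i.d.\ according to $P_S$, independent of the input, and observed by neither party; the feedback carries only $Y_t$, not $S_t$. Consequently, marginalizing out $s^n$,
\begin{equation*}
\sum_{s^n \in \sets^n} P_S^n(s^n)\, W_S^n(y^n \mid x^n, s^n) \;=\; \prod_{t=1}^n \tilde W(y_t \mid x_t), \qquad \tilde W(y \mid x) := \mathbb{E}\bigl[W_S(y \mid x, S)\bigr],
\end{equation*}
and this identity continues to hold when $x_t$ depends causally on $y^{t-1}$ (since $S_t$ is independent of $X_t$). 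Thus the error probability constraints \eqref{eq:error1d}--\eqref{eq:error2d} in Definition~\ref{def:IDf} coincide with the standard deterministic ID feedback error constraints for the DMC $\tilde W$. Since $C(W_S)=\max_{P_X} I(X;Y)$ evaluated with kernel $\tilde W$, the hypothesis $C(W_S)>0$ is equivalent to $C(\tilde W)>0$.

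Given this reduction, the theorem follows from the Ahlswede--Dueck characterization of the deterministic ID feedback capacity of a DMC with positive Shannon capacity, namely $C_{IDf}^d(\tilde W)=\max_{x\in\setx} H\bigl(\tilde W(\cdot\mid x)\bigr)$, which is precisely $\max_{x\in\setx} H\bigl(\mathbb{E}[W_S(\cdot\mid x, S)]\bigr)$. For the achievability I would reproduce the Ahlswede--Dueck construction adapted to our setting: fix $x^\ast \in \arg\max_{x} H(\tilde W(\cdot\mid x))$, split the blocklength as $n = n_1 + n_2$ with $n_1 = n - o(n)$, and use the first $n_1$ symbols to transmit the constant input $x^\ast$. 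Then $Y^{n_1}$ is i.i.d.\ $\tilde W(\cdot\mid x^\ast)$ and is available at the sender through feedback and at the receiver directly, providing a block of common randomness of entropy roughly $n_1 H(\tilde W(\cdot\mid x^\ast))$. A standard hash-based construction assigns to each identity $i$ a coloring of the high-probability types of $Y^{n_1}$; the sender then uses the remaining $n_2$ channel uses to reliably transmit (in the Shannon sense over $\tilde W$, which is possible because $C(\tilde W)>0$) the hash value corresponding to $i$. Choosing the number of colors as $2^{nR}$ for any $R<\max_x H(\tilde W(\cdot\mid x))$ yields $N = 2^{2^{nR}}$ identities with the two kinds of error probabilities driven below any $\lambda>0$.

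For the converse I would rely on the strong converse for ID with noiseless feedback over a DMC, again applied to the reduced channel $\tilde W$: any $(n,N,\lambda_1,\lambda_2)$ deterministic ID feedback code with $\lambda_1+\lambda_2<1$ must satisfy $\tfrac{1}{n}\log\log N \le \max_{x}H(\tilde W(\cdot \mid x)) + o(1)$. The intuition is that the feedback-induced common randomness at time $t$ is governed by the conditional output entropy, which given any deterministic strategy is at most $\max_x H(\tilde W(\cdot\mid x))$ per symbol; hence the output distributions induced by distinct identities can form at most $2^{2^{n(\max_x H(\tilde W(\cdot\mid x)) + o(1))}}$ pairwise-distinguishable types.

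The main obstacle, as I see it, is not the achievability or converse themselves---both are structurally Ahlswede--Dueck---but rather verifying that the marginalization over $S^n$ is fully compatible with adaptive, feedback-based encoding strategies as in \eqref{eq: f_function}. One must justify carefully that because $S_t$ is drawn fresh and independently at each time and is never observed, the effective channel seen by any strategy $\bs f_i$ is exactly $\tilde W^n$; only then can the error constraints \eqref{eq:error1d}--\eqref{eq:error2d} be rewritten as those of Definition~\ref{def:IDf} for the stateless DMC $\tilde W$, which unlocks direct application of the classical feedback ID results.
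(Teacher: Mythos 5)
Your proposal is correct in substance and follows essentially the same route as the paper: the paper's direct proof is exactly your reduction, forming the averaged channel $W_{S,\text{avg}}(y|x)=\sum_{s}P_S(s)W_S(y|x,s)=\tilde W(y|x)$ and invoking the known deterministic ID feedback capacity formula for a DMC with positive Shannon capacity, and its achievability construction (constant input $x^\star$ on the first $n$ symbols to generate common randomness via feedback, then hashing and a short Shannon code for the color) is the one you sketch. The one point where the clean reduction does not literally go through is the converse: Definition~\ref{def:IDf} allows the decoding sets $\setd_i(s^n)$ to depend on $s^n$, so the constraints \eqref{eq:error1d}--\eqref{eq:error2d} are not verbatim the error constraints of a feedback ID code for the stateless DMC $\tilde W$, and this is precisely why the paper re-derives the image-size bound (Lemma~\ref{lemma3}) directly for the state-dependent code --- noting that $\mathbb{E}_{S^n}[W_S^n(\cdot|\boldsymbol{f},S^n)]=\tilde W^n(\cdot|\boldsymbol{f})$ for any causal strategy, exactly as you argue --- and then runs the Ahlswede--Dueck counting argument on the sets $\setd_i(s^n)\cap\sete_i$ rather than citing the stateless converse as a black box.
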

\begin{Theorem}
If $C(W_S)>0$, then the randomized ID feedback capacity of $W_S$ is given by
	\begin{equation}
	C_{IDf}^r(W_S)=\max_{P\in \setp\left(\setx\right)} H\bigg(\sum_{x\in\setx}P(x)\mathbb{E}\left[W_S(\cdot|x,S)\right]\bigg).
	\end{equation}
	\label{MainTheorem2}
\end{Theorem}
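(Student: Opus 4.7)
The plan is to mirror the proof of Theorem \ref{MainTheorem1} and to adapt the randomized feedback ID construction of Ahlswede and Dueck \cite{Idfeedback,HanBook} to the state-dependent channel. Since the sender and receiver are both oblivious to $S^n$, the outer average over $S^n\sim P_S^n$ in the error conditions \eqref{eq:error1r}--\eqref{eq:error2r} shows that the code's performance depends only on the marginal, averaged channel
\[
\bar W(y\mid x) := \mathbb{E}[W_S(y\mid x,S)] = \sum_{s\in\sets} P_S(s)\,W_S(y\mid x,s).
\]
Hence the problem reduces to computing the randomized ID feedback capacity of the stateless DMC $\bar W$, for which the Ahlswede--Dueck formula in its randomized form equals $\max_{P}H\bigl(\sum_{x}P(x)\bar W(\cdot\mid x)\bigr)$.

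For achievability, I fix a maximizer $P^{\ast}$ and split the block into two parts. During the first $n_1 = n-o(n)$ channel uses, the sender draws $X_t$ i.i.d.\ from $P^{\ast}$, independently of the message $i$; the outputs $Y^{n_1}$ are then i.i.d.\ with distribution $P_Y^{\ast} := \sum_x P^{\ast}(x)\bar W(\cdot\mid x)$ and, via noiseless feedback, are observed by both parties, yielding $n_1 H(P_Y^{\ast})$ bits of shared common randomness. The remaining $n_2 = o(n)$ channel uses---available because $C(W_S)>0$---carry a short Shannon-reliable ``colour tag'' that selects one of roughly $2^{n_1 H(P_Y^{\ast})}$ hash classes for the message $i$, following the standard common-randomness-to-ID construction \cite{AhlDueck,HanBook}. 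This yields an $\bigl(n,2^{2^{n(H(P_Y^{\ast})-\epsilon)}},\lambda,\lambda\bigr)$ randomized ID feedback code for every $\epsilon>0$ and all sufficiently large $n$.

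The converse follows the strong-converse scheme for ID with feedback in \cite[Ch.~6]{HanBook}, specialized to the averaged channel $\bar W$. For any randomized $(n,N,\lambda_1,\lambda_2)$ code, the output distribution induced by message $i$ at time $t$ is a convex combination $\sum_{x}P_{t,i}(x)\bar W(\cdot\mid x)$ because the encoder is blind to $S_t$. The second-kind error constraint \eqref{eq:error2r} together with a covering argument in the space of output distributions yields, for each $i$, $\log\log N \le \sum_{t=1}^{n} H\bigl(\sum_x P_{t,i}(x)\bar W(\cdot\mid x)\bigr) + o(n)$; concavity of entropy and a time-sharing step then reduce the right-hand side to $n\max_P H\bigl(\sum_x P(x)\bar W(\cdot\mid x)\bigr)$. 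The main obstacle will be to justify the CR-to-ID reduction rigorously under the random state: one must verify that the joint law of $Y^{n_1}$ seen by sender and receiver is precisely $(P_Y^{\ast})^{n_1}$ after marginalizing over $S^{n_1}\sim P_S^{n_1}$, so that the Ahlswede--Dueck analysis of \cite{AhlDueck} applies verbatim. Once this is in place, the deterministic argument of Theorem~\ref{MainTheorem1} transfers almost unchanged, with $\max_x$ replaced by $\max_P$.
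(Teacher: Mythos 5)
Your proposal is correct and follows essentially the same route as the paper: both reduce the problem to the state-averaged channel $\mathbb{E}[W_S(\cdot|x,S)]$ (which is legitimate precisely because the i.i.d.\ state, independent of the feedback-determined inputs, makes the averaged $n$-fold channel factorize into a memoryless DMC), invoke the Ahlswede--Dueck randomized feedback construction on that channel for achievability, and prove the converse via an entropy-based bound on the essential image size of each randomized feedback strategy. The factorization step you flag as the ``main obstacle'' is exactly the chain of equalities the paper carries out in its extension of the image-size lemma, so no gap remains.
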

\begin{Remark}
It can be shown that the same ID feedback capacity formula holds if the channel state is known to either the sender or the receiver. This is because we achieve the same amount of common randomness as in the scenario depicted in Fig. \ref{fig:SystemFeedback}. Intuitively, the channel state is an additional source of randomness that we take advantage of.
\end{Remark}

Now, we consider the third scenario depicted in Fig. \ref{Fig:capcityDistortion}, where we want to jointly identify and sense the channel state. The sender comprises an encoder that sends a symbol $x_t=f_i^t(y^{t-1})$ for each identity $i \in \{1,\ldots,N\}$ and delayed feedback output $y^{t-1} \in \sety^{t-1}$ and a state estimator that outputs an estimation sequence $\hat{s}^n \in {\sets}^n$ based on the feedback output and the input sequence. We define the per symbol distortion as follows:
\begin{equation}
 d_t=\mathbb{E}\left[d(S_t,\hat{S}_t)\right],\label{eq:persymbolDistortion}
\end{equation}
where $d\colon \sets \times {\sets} \to [0, +\infty)$ is a distortion function and the expectation is over the joint distribution of $(S_{t},\hat{S}_{t})$ conditioned by the ID message $i \in \{1,\ldots,N\}$.
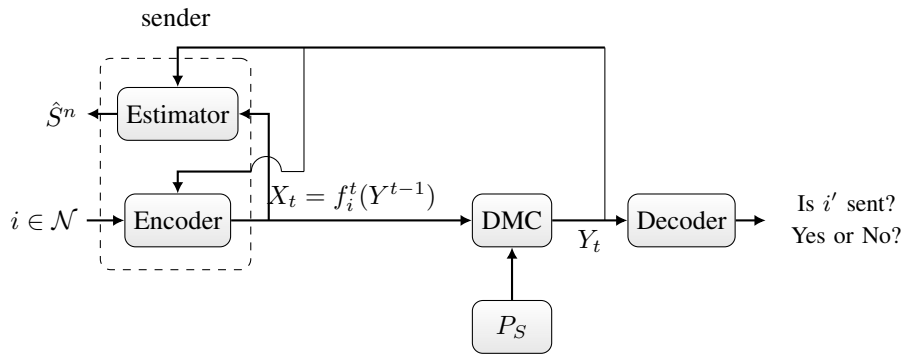
\begin{figure}[hbt!]
\centering
\tikzstyle{farbverlauf} = [ top color=white, bottom color=white!80!gray]
\tikzstyle{block} = [draw,top color=white, bottom color=white!80!gray, rectangle, rounded corners,
minimum height=2em, minimum width=3em]
\tikzstyle{block1} = [draw, fill=none, rectangle, rounded corners,
minimum height=8em, minimum width=2
cm]
\tikzstyle{input} = [coordinate]
\tikzstyle{sum} = [draw, circle,inner sep=0pt, minimum size=2mm,  thick]

\scalebox{1}{
\tikzstyle{arrow}=[draw,->] 
\begin{tikzpicture}[auto, node distance=2cm,>=latex']
\node[] (M) {$i \in \mathcal{N}$};

\node[block,right=.5cm of M] (enc) {Encoder};
\node[block, above=.7cm of enc] (est) {Estimator};
\node[left=.4cm of est] (S) {$\hat{S}^n$};
\node[block, right=3.2cm of enc] (channel) {DMC};
\node[block1, dashed] at (1.75,.75) (tr) {};
\node[above=.3cm of tr] (tran) {sender};
\node[block,below=.7cm of channel](state){$P_S$};
\node[block, right=1cm of channel] (dec) {Decoder};
\node[right=.4cm of dec] (Mhat) {\begin{tabular}{c} {\small Is ${i}^\prime$ sent?} \\ {\small Yes or No?} \end{tabular}};
\node[input,right=.7cm of channel] (t1) {};
\node[above=2cm of channel] (t4) {};
\node[below=1.5cm of t1] (t5) {};
\node[input,above=2.3cm of t1] (t2) {};
\node[input,left=4cm of t2] (point) {};
\node[input,below=1.65cm of point] (ttpoint) {};
\node[input,left=4.3cm of t2] (tt) {};
\node[input,below=1.65cm of tt] (ttt) {};
\node[input,left=0.4cm of ttt] (ttn) {};
\node[input, right=.5cm of enc] (t3) {};
\node[input, below=0.35 cm of est] (tte) {};
\draw[-{Latex[length=1.5mm, width=1.5mm]},thick] (M) -- (enc);
\draw[-{Latex[length=1.5mm, width=1.5mm]},thick] (enc) --node[above]{ $X_t=f_i^t(Y^{t-1})$} (channel);
\draw[-{Latex[length=1.5mm, width=1.5mm]},thick] (channel) --node[below]{$Y_t$} (dec);
\draw[-{Latex[length=1.5mm, width=1.5mm]},thick] (dec) -- (Mhat);
\draw[-] (t1) -- (t2);
\draw[-{Latex[length=1.5mm, width=1.5mm]},thick] (t2) -|  (est);
\draw[-{Latex[length=1.5mm, width=1.5mm]},thick] (state)--(channel);
\draw[-{Latex[length=1.5mm, width=1.5mm]},thick] (est) --(S);
\draw[-{Latex[length=1.5mm, width=1.5mm]},thick] (t3) |- (est);
\draw (ttt) arc[start angle=0, end angle=180, radius=0.2]  (ttn); 
\draw[-{Latex[length=1.5mm, width=1.5mm]},thick] (ttn) -| (enc);
\draw[-] (point) -- (ttpoint);
\draw[-] (ttpoint) -- (ttt);
\end{tikzpicture}}
\caption{State-dependent channel with noiseless feedback}
\label{Fig:capcityDistortion}
\end{figure}
\begin{Definition}
    \begin{enumerate} 
        \item An ID rate-distortion pair $(R,D)$ for $W_S$ is said to be achievable if for every $\lambda \in (0,\frac{1}{2})$ there exists an $n_0(\lambda)$, such that for all $n \geq n_0(\lambda)$ there exists an $(n,2^{2{nR}}, \lambda,\lambda)$ (deterministic/randomized) ID code for $W_S$ and for all $t=1,\cdots,n$, $d_t \leq D$. 
        \item The deterministic ID capacity-distortion function $C_{ID}^d(D)$ is defined as the supremum of $R$ such that $(R,D)$ is achievable.
    \end{enumerate}
\end{Definition}
We choose without loss of generality the following deterministic estimation function $h^\star$
\begin{equation}
    \hat{s}=h^\star(x,y)=\min_{h:\setx\times\sety\to \sets}\mathbb{E}\left[d(S,h(X,Y))|X=x,Y=y\right],
\end{equation}
where $h\colon \setx\times \sety \to \sets$ is an estimator that maps a pair of channel input and feedback output to a channel state. If there exist several functions $h^\star(\cdot,\cdot)$, we choose one randomly. 
We define the minimal distortion function for each input symbol $x \in \setx$ as in \cite{oneShotEstimator}:
\begin{equation}
    d^\star(x)=\mathbb{E}_{SY}[d(S,h^*(X,Y))|X=x], \label{eq:minimalDis1}
\end{equation}
and the minimal distortion function for each input distribution $P\in \setp\left(\setx\right)$:
\begin{equation}
    d^\star(P)=\sum_{x\in\setx}d^\star(x).\label{eq:minimalDis2}
\end{equation}
In the following, we establish the ID capacity-distortion function defined above.
\begin{Theorem}
The deterministic ID capacity-distortion function of the state-dependent channel $W_S$ depicted in Fig. \ref{Fig:capcityDistortion} is given by:
\begin{equation}
    C_{ID}^d (D) = \max_{x \in \setx_D}  H\big(\mathbb{E}[W_S(\cdot|x,S)] \big),
\end{equation}
where the set $\setx_D$ is given by
\begin{equation}
    \setx_D=\{x \in \setx,\quad d^\star(x) \leq D\}.
\end{equation}
\label{theorem:capDist1}
\end{Theorem}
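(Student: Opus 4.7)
\textbf{Proof plan for Theorem~\ref{theorem:capDist1}.} The structure mirrors the proof of Theorem~\ref{MainTheorem1}, with the distortion constraint effectively restricting the usable input alphabet to $\setx_D$. I would establish the theorem via matching achievability and converse arguments.

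\emph{Achievability.} I would fix an input symbol $x^\star \in \setx_D$ attaining $\max_{x \in \setx_D} H(\mathbb{E}[W_S(\cdot|x,S)])$ and reuse the deterministic feedback ID coding scheme from Theorem~\ref{MainTheorem1}, feeding the channel the fixed symbol $x^\star$ throughout the common-randomness-generating phase. Since the state sequence is i.i.d., the induced output sequence is i.i.d.\ according to $\mathbb{E}[W_S(\cdot|x^\star,S)]$, which the sender reads via the feedback link and uses as shared randomness with the receiver; the Ahlswede--Dueck transformator then converts this common randomness into an ID feedback code of doubly-exponential size at rate arbitrarily close to $H(\mathbb{E}[W_S(\cdot|x^\star,S)])$. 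At each time $t$ the sender runs the symbol-wise estimator $\hat{S}_t = h^\star(x^\star, Y_t)$, so by \eqref{eq:minimalDis1} we get $d_t = d^\star(x^\star) \leq D$. For the sublinear-length initial transmission phase needed by the feedback scheme, one restricts the Shannon codebook to inputs in $\setx_D$, which still yields positive transmission rate under the standing assumption and preserves the per-symbol distortion bound.

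\emph{Converse.} I would adapt the converse of Theorem~\ref{MainTheorem1}. For any $(n,N,\lambda,\lambda)$ deterministic ID feedback code satisfying $d_t \leq D$ for every $t$, the common randomness that the code induces between sender and receiver is upper bounded, up to vanishing terms, by $H(Y^n)/n$. Because $X_t = f_i^t(Y^{t-1})$ is a deterministic function of $Y^{t-1}$ and the state is i.i.d., the conditional law of $Y_t$ given $Y^{t-1}$ is $\mathbb{E}[W_S(\cdot|X_t,S)]$, so $H(Y_t\mid Y^{t-1}) = \mathbb{E}\bigl[H(\mathbb{E}[W_S(\cdot|X_t,S)])\bigr]$. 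Writing $\varphi(x):=H(\mathbb{E}[W_S(\cdot|x,S)])$, the distortion constraint reads $\mathbb{E}[d^\star(X_t)]=d_t \leq D$, and I then argue that each term is bounded by $\max_{x \in \setx_D} \varphi(x)$; invoking the same transformator-converse ingredient used in Theorem~\ref{MainTheorem1} to pass from common randomness to $\log\log N$ yields the claimed upper bound.

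\emph{Main obstacle.} The delicate point is in the converse: under a deterministic feedback strategy $X_t$ is random (through $Y^{t-1}$), and the distortion constraint is only an \emph{expected} bound, $\mathbb{E}[d^\star(X_t)] \leq D$, yet the target formula features a \emph{pointwise} single-letter maximization over $x \in \setx_D$. One must therefore argue that values $X_t = x$ with $d^\star(x) > D$ contribute negligibly to the common-randomness budget; a natural route is to show, using that $\varphi$ is a bounded function on a finite alphabet and that $d^\star$ is bounded below on $\setx\setminus\setx_D$, that the fraction of time indices (or, for each $t$, the probability) at which the code uses an input outside $\setx_D$ must vanish, so that the bound $\mathbb{E}[\varphi(X_t)] \leq \max_{x \in \setx_D}\varphi(x)+o(1)$ holds uniformly. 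This reduction from the soft (expected) distortion constraint to the hard support restriction $\setx_D$ is the step I expect to require the most care; everything else essentially reuses the machinery of Theorem~\ref{MainTheorem1}.
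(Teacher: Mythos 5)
Your achievability plan is essentially the paper's: fix $x^\star\in\setx_D$ maximizing $H(\mathbb{E}[W_S(\cdot|x,S)])$, send $x^{\star n}$ in the common-randomness phase, read $Y^n$ back through the feedback link, color it with $F_i$ and transmit the color with a short transmission code whose codewords are restricted to $\setx_D$; the per-symbol distortion then collapses to $\sum_{x}P_{X_t}(x)d^\star(x)=d^\star(x^\star)\le D$, exactly the computation the paper performs at the end of its direct proof. Your converse route is also morally the paper's: the paper does not phrase things through $H(Y^n)$, but its key step (Lemma \ref{lemma:K3}, extending Lemma \ref{lemma3}) bounds $\mathbb{E}[Z_t\mid Y^{t-1}]$, which is precisely your $H(Y_t\mid Y^{t-1})$ term, concentrates the output measure on a set of size $2^{n\max_{x\in\setx_D}H(\mathbb{E}[W_S(\cdot|x,S)])+\alpha\sqrt{n}}$ via Chebyshev, and then counts decoding sets to bound $\log\log N$.

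The genuine gap is in your proposed resolution of the obstacle you correctly single out. Writing $\varphi(x):=H(\mathbb{E}[W_S(\cdot|x,S)])$, the constraint \eqref{eq:persymbolDistortion} only gives $\mathbb{E}[d^\star(X_t)]\le D$, and your suggested repair — that inputs outside $\setx_D$ must be used with vanishing probability — is false in general: the expected constraint is compatible with placing a \emph{constant} probability mass on a symbol $x$ with $d^\star(x)>D$ but large $\varphi(x)$, compensated by mass on a symbol with small $d^\star$, in which case $\mathbb{E}[\varphi(X_t)]$ can strictly exceed $\max_{x\in\setx_D}\varphi(x)$ and no $o(1)$ correction saves the bound. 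The paper does not close this gap either; it sidesteps it by stating Lemma \ref{lemma:K3} only for feedback strategies satisfying the constraint \emph{pointwise} ($d^\star(f_D^t)\le D$ for all $t$, i.e., every emitted symbol lies in $\setx_D$), which is stronger than what the operational definition requires. So you should either adopt that pointwise reading explicitly — in which case your obstacle disappears and your converse reduces to the paper's — or accept that under the averaged reading the single-letter expression would have to be an optimization over input distributions meeting the average constraint rather than a maximum over $\setx_D$. As written, the final step of your converse does not follow.
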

We now turn our attention to a randomized encoder. In the following, we derive the ID capacity-distortion function of the state-dependent channel $W_S$ under the assumption of randomized encoding.
\begin{Theorem}
The randomized ID capacity-distortion function of the state-dependent channel $W_S$ is given by:
\begin{equation}
    C_{ID}^r (D) = \max_{P \in \setp_D}  H\big(\sum_{x\in\setx}P(x)\mathbb{E}[W_S(\cdot|x,S)] \big),
\end{equation}
where the set $\setp_D$ is given by
\begin{equation}
    \setp_D=\{P \in \setp(\setx),\quad d^\star(P) \leq D\}.
\end{equation}
\label{theorem:capDist2}
\end{Theorem}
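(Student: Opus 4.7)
The argument will mirror the proof of Theorem \ref{theorem:capDist1}, replacing a single input symbol $x\in\setx_D$ by an input distribution $P\in\setp_D$ and invoking Theorem \ref{MainTheorem2} in place of Theorem \ref{MainTheorem1}.

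For the direct part, I fix any $P\in\setp_D$ and apply the randomized ID feedback construction from the proof of Theorem \ref{MainTheorem2} driven by $P$. This yields channel inputs with per-time marginal $P$, an output marginal $\sum_{x}P(x)\mathbb{E}[W_S(\cdot|x,S)]$ at each use, and, after the Ahlswede--Dueck common-randomness boost, a randomized ID feedback code whose rate is arbitrarily close to $H\!\big(\sum_{x}P(x)\mathbb{E}[W_S(\cdot|x,S)]\big)$. At the sender, the estimator $\hat{S}_t=h^\star(X_t,Y_t)$ is applied symbol-by-symbol; the independence of $S_t$ from $X_t$ together with the optimality of $h^\star$ reduces the per-symbol expected distortion to $d^\star(P)\leq D$, so the sensing constraint is met. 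Taking the supremum over $P\in\setp_D$ then establishes the lower bound.

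For the converse, I plan to observe that in any randomized ID feedback code meeting $d_t\leq D$, the marginal input distribution $P^{(i)}_{X_t}$ at each time $t$ (averaging over the feedback history and the code randomness for message $i$) must lie in $\setp_D$: since $S_t$ is independent of $X_t$ and $h^\star$ is pointwise optimal, the conditional per-symbol distortion under message $i$ evaluates to $d^\star(P^{(i)}_{X_t})$, which is at most $D$ by hypothesis. Plugging this restriction into the common-randomness converse used to establish Theorem \ref{MainTheorem2}, the per-symbol output entropy -- which upper-bounds the rate of shared randomness available between sender and receiver -- is at most $\max_{P\in\setp_D} H\!\big(\sum_{x}P(x)\mathbb{E}[W_S(\cdot|x,S)]\big)$, and so is the achievable ID rate.

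The main obstacle lies in the converse: one must argue carefully that the per-symbol distortion constraint translates into a restriction of the effective per-time input distribution to $\setp_D$ for every $t$ and every message $i$, and then propagate this pointwise restriction through the common-randomness converse of Theorem \ref{MainTheorem2}. The key enabler is that only the marginal of $X_t$ enters both the distortion expression and the output-entropy bound, so the feedback dependence of $X_t$ on past outputs $Y^{t-1}$ does not obstruct the argument.
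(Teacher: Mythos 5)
Your proposal follows essentially the same route as the paper: the direct part is the randomized feedback construction of Theorem \ref{MainTheorem2} run with an input distribution $P^\star$ maximizing the output entropy over $\setp_D$, with the per-symbol distortion collapsing to $d^\star(P)\leq D$ via the pointwise-optimal estimator $h^\star$; the converse is the image-size (common-randomness) bound of Lemma \ref{lemma:K2}, restricted so that the per-time input distribution lies in $\setp_D$, which is exactly the paper's Lemma \ref{lemma:K4}. The step you flag as the main obstacle---propagating the per-symbol distortion constraint into the per-time entropy bound $\mathbb{E}[Z_t\mid y^{t-1}]\leq\max_{P\in\setp_D}H(\sum_x P(x)\mathbb{E}[W_S(\cdot|x,S)])$---is treated at the same level of detail in the paper, so there is no divergence in approach.
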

\begin{Remark}
   Randomized encoding achieves higher rates than deterministic encoding. This is because we are combining two sources of randomness; local randomness used for the encoding and the shared randomness generated via the noiseless feedback link. The result is analogous to randomized ID over DMCs in the presence of noiseless feedback studied in \cite{Idfeedback}.

\end{Remark}
\section{Proof of the Main Results} \label{sec:mainproof}
In this section, we provide the proofs of Theorem \ref{MainTheorem1}, Theorem \ref{MainTheorem2}, Theorem \ref{theorem:capDist1}, and Theorem \ref{theorem:capDist2}.


\subsection{Direct Proof of Theorem \ref{MainTheorem1}}
We consider an average channel $W_{S,\text{avg}}$ given by
\begin{equation}
W_{S,\text{avg}}(y|x)= \sum_{s \in \sets} P_S(s) W_S(y|x,s), \quad \forall x \in \setx,\ \forall y \in \sety. \label{eq:averageChannel}
\end{equation}
The DMC $W_{S,\text{avg}}$ is obtained by averaging the DMCs $W_S$ over the state. Now, it suffices to show that $R=\max_{x \in \setx} H\big(\mathbb{E}[W_S(\cdot|x,S)] \big)$ is an achievable ID feedback rate for the average channel $W_S^\text{a}$.
 The deterministic ID feedback capacity of the average channel $C_{IDf}^d(W_{S,\text{avg}})$ can be determined by applying \cite[Theorem 1]{identificationwithfeedback} on $W_{S,\text{avg}}$. If the transmission capacity $C(W_{S,\text{avg}})$ of $W_{S,\text{avg}}$ is positive, we have
\begin{align}
C_{IDf}^d(W_{S,\text{avg}}) & \leq \max_{x \in \setx} H\big(W_{S,\text{avg}}(\cdot|x)\big) \\
& = \max_{x \in \setx} H\big(\mathbb{E}[W_S(\cdot|x,S)] \big).
\end{align}
This completes the direct proof of Theorem \ref{MainTheorem1}. \qed

\subsection{Converse Proof of Theorem \ref{MainTheorem1}}
\label{sec_converse_IDS}
For the converse proof, we use the techniques of \cite{Idfeedback} for deterministic ID over DMCs with noiseless feedback. We first extend \cite[Lemma~3]{Idfeedback} (image size for a deterministic feedback strategy) to the deterministic ID feedback code for $W_S$ described in Definition \ref{def:IDf}. 
\begin{Lemma}
For any feedback strategy $\boldsymbol{f}=[f^1,f^2\ldots,f^n]$ and any $\mu \in (0,1)$, we have
\begin{equation}
    \min_{\sete_1\subset \sety^n\colon \mathbb{E}_{S^n}\big[W^n_S(\sete_1|\boldsymbol{f},S^n)\big] \geq 1-\mu} |\sete_1| \leq K_1, \label{eq:cardinalityOfE}
\end{equation}
where $K_1$ is given by
\begin{equation}
    K_1
    =2^{n\max_{x\in\setx}H\left(\mathbb{E}\left[W_S(\cdot|x)\right]\right)+\alpha\sqrt{n}}
    =2^{nH\left(\mathbb{E}\left[W_S(\cdot|x^{\star})\right]\right)+\alpha\sqrt{n}}, \label{eq:definitionOfK}
\end{equation}
where  \begin{equation}
    \alpha=\sqrt{\frac{\beta}{\mu}}, \quad \beta=\max(\log^2(3),\log^2(|\sety|) \label{eq:DefinitionofBeta}.
\end{equation}
\label{lemma3}
\end{Lemma}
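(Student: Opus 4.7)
The plan is to reduce the state-dependent problem to the stateless image-size bound of \cite[Lemma~3]{Idfeedback} by observing that averaging over the i.i.d.\ state sequence $S^n$ collapses the family $\{W_S(\cdot|\cdot,s)\}_{s\in\sets}$ into the single DMC $W_{S,\text{avg}}$ defined in \eqref{eq:averageChannel}. Concretely, for any feedback strategy $\boldsymbol{f}=[f^1,\ldots,f^n]$ and any fixed output sequence $y^n\in\sety^n$, the input $f^t(y^{t-1})$ at each time step is determined by $y^{t-1}$ alone (which is fixed), so the following factorization holds:
\begin{align}
\mathbb{E}_{S^n}\left[W_S^n(y^n|\boldsymbol{f}, S^n)\right]
&= \sum_{s^n\in\sets^n}\prod_{t=1}^n P_S(s_t)\,W_S(y_t|f^t(y^{t-1}),s_t) \\
&= \prod_{t=1}^n \sum_{s_t\in\sets} P_S(s_t)\,W_S(y_t|f^t(y^{t-1}),s_t) \\
&= \prod_{t=1}^n W_{S,\text{avg}}(y_t|f^t(y^{t-1})).
\end{align}
Hence the state-averaged output distribution induced by $\boldsymbol{f}$ on the state-dependent channel coincides with the output distribution induced by the same $\boldsymbol{f}$ on the stateless averaged DMC $W_{S,\text{avg}}$.

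Given this reduction, the second step is to invoke \cite[Lemma~3]{Idfeedback} directly on $W_{S,\text{avg}}$. That lemma bounds, for any feedback strategy on an arbitrary DMC $W$ and any $\mu\in(0,1)$, the smallest cardinality of a set $\sete_1\subset\sety^n$ with $W^n(\sete_1|\boldsymbol{f})\geq 1-\mu$ by $2^{n\max_{x\in\setx}H(W(\cdot|x))+\alpha\sqrt{n}}$, using the same constants $\alpha$ and $\beta$ as in \eqref{eq:DefinitionofBeta}. Substituting $W=W_{S,\text{avg}}$ and using the identity $H(W_{S,\text{avg}}(\cdot|x))=H(\mathbb{E}[W_S(\cdot|x,S)])$ yields precisely the bound $K_1$ claimed in \eqref{eq:definitionOfK}, attained at the maximizing symbol $x^\star$.

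The delicate step that I expect to require the most care is the justification of the factorization in the display above: because $\boldsymbol{f}$ is a deterministic feedback strategy, once $y^n$ is fixed the conditional input sequence is deterministic, so $S^n$ only enters through the $n$ channel kernels $W_S(y_t|x_t,s_t)$, which under $P_S^n$ are independent across $t$. This independence is what allows the expectation to decouple into a product and identifies the averaged kernel time-slot by time-slot. Beyond this observation I do not anticipate a genuinely new technical obstacle: the Chebyshev-style typicality argument underlying \cite[Lemma~3]{Idfeedback} depends only on per-letter entropies and on the constant $\beta$, both of which refer to $\sety$ and to the averaged channel, not to the state space $\sets$, so no term involving $|\sets|$ enters $K_1$.
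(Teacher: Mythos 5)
Your proposal is correct and rests on exactly the same key observation as the paper's proof: because the states are i.i.d.\ and the feedback inputs $f^t(y^{t-1})$ are deterministic once $y^n$ is fixed, $\mathbb{E}_{S^n}\big[W_S^n(y^n|\boldsymbol{f},S^n)\big]$ factorizes into $\prod_{t=1}^n W_{S,\text{avg}}(y_t|f^t(y^{t-1}))$, so only the per-letter entropy of the averaged kernel matters. The only difference is presentational: you invoke \cite[Lemma~3]{Idfeedback} as a black box on $W_{S,\text{avg}}$ after the reduction, whereas the paper performs the same factorization inside the proof and then reproduces the martingale-difference/Chebyshev argument explicitly with $Z_t=-\log\mathbb{E}_{S_t}\big[W_S(Y_t|f^t(Y^{t-1}),S_t)\big]$.
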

\begin{proof}
We use a similar idea as for the proof of \cite[Lemma 3]{Idfeedback}. Let $\sete_1^\star \subset \sety^n$ be defined as follows:
\begin{equation}
    \sete_1^{\star}=\bigg \{ y^n \in \sety^n, \quad -\log\mathbb{E}_{S^n}\big[W^n_S(y^n|\boldsymbol{f},S^n)\big] \leq \log K_{1} \bigg\}. \label{eq:definitionOfE}
\end{equation}
It follows from the definition of $\sete_{1}^\star$ that $|\sete^\star|\leq K_1$. It remains to show that 
\[\mathbb{E}_{S^n}\big[W_S^n(y^n|\boldsymbol{f},S^n)\big] \geq 1-\mu.\] For $t=1,2,\ldots,n$ and a fixed feedback strategy $\boldsymbol{f}=[f^1,\ldots,f^t] \in \setx$, we have
\begin{align}
    \Pr\{Y^t=y^t\} & =\sum_{s^n \in \sets^n} P_S^n(s^n)W^n(y^n|\boldsymbol{f},s^n) \\
    & = \mathbb{E}_{S^n}\big[ W_S^n(y^n|\boldsymbol{f},S^n) \big], \quad \forall y^t \in \sety^t.
\end{align}
Let the RV $Z_t$ be defined as follows:
\begin{equation}
    Z_t=-\log \mathbb{E}_{S_t}\big[ W_S(Y_t|f^t(Y^{(t-1)},S_t) \big].
\end{equation}
We have
\begin{align}
 \mathbb{E}_{S^n}\big[ W_S^n(\sete_1^\star|\boldsymbol{f},S^n) \big]
 & = \Pr\bigg\{  -\log \mathbb{E}_{S^n}\big[ W_S^n(y^n|\boldsymbol{f},S^n) \big] \leq \log K_1  \bigg\} \nonumber\\
 &=  \Pr\bigg\{ -\log \big(\sum_{s^n \in \sets^n} P_S^n(s^n) \prod_{t=1}^n W_S(y_t|f^t,s_t) \big) \leq \log K_1   \bigg\} \nonumber\\
 &= \Pr\bigg\{ -\log \big( \sum_{s_1 \in \sets} \sum_{s_2 \in \sets} \cdots \sum_{s_n \in \sets} \prod_{t=1}^n P_S(s_t) W_S(y_t|f^t,s_t)  \big) 
  \leq \log K_1 \bigg \}\nonumber \\
 &= \Pr \bigg \{  -\log \big( \sum_{s_1 \in \sets} P_S(s_1) W_S(y_1|f^1,s_1)  \nonumber\\
 & \quad \cdot \sum_{s_2 \in \sets} P_S(s_2) W_S(y_2|f^2,s_2)\cdots \sum_{s_n \in \sets} P_S(s_n) W_S(y_n|f^n,s_n)   \big)  \leq \log K_1 \bigg\} \nonumber\\
 &=\Pr \bigg \{  \sum_{t=1}^n -\log\mathbb{E}_{S_t}\big[ W_S(Y_t|f^t(Y^{(t-1)},S_t) \big] \leq \log K_1\bigg \}\nonumber\\
 &= \Pr\bigg \{ \sum_{t=1}^n Z_t \leq \log K_1 \bigg\}.\label{eq:Z_t}
\end{align}
Now, we want to establish a lower bound on $\mathbb{E}_{S^n}\big[ W_S^n(\sete^\star|\boldsymbol{f},S^n) \big]$. It suffices to find a lower bound on the expression in \eqref{eq:Z_t}. Let the RV $U_t$ be defined as follows:
\begin{equation}
U_t=Z_t-\mathbb{E}[Z_t|Y^{t-1}], \quad t\in \{1,2,\ldots,n\}. \label{eq:definitionOfUt}
\end{equation}
It can be shown that
\begin{align}
\mathbb{E}[U_t|Y^{t-1}] &= \mathbb{E}_{Y}\bigg[ Z_t-\mathbb{E}_Y[Z_t|Y^{t-1}] | Y^{t-1} \bigg] \nonumber\\
&= \mathbb{E}_{Y}[ Z_t|Y^{t-1} ]- \mathbb{E}_Y[Z_t|Y^{t-1}] \nonumber\\
&=0.
\label{eq:u_t1}
\end{align}
Furthermore, we have
\begin{align}
\mathbb{E}[U_t]&=\mathbb{E}_{Y}\bigg[ Z_t-\mathbb{E}_Y[Z_t|Y^{t-1}] \bigg] \nonumber\\
&= \mathbb{E}_{Y}[ Z_t]- \mathbb{E}_Y\bigg[\mathbb{E}_Y[Z_t|Y^{t-1}]\bigg] \nonumber\\
&=\mathbb{E}_{Y}[ Z_t]- \mathbb{E}_{Y}[ Z_t] \nonumber\\
&=0.
\label{eq:u_t2}
\end{align}
Furthermore, it can be shown that for $y^{t-1} \in \sety^{t-1}$
\begin{align}
\mathbb{E}[Z_t|y^{t-1}] 
&  = \sum_{y_t \in \sety} \bigg( -\mathbb{E}_{S_t}\big[W_S(y_t|f^t(y^{t-1}),S_t)\big] \nonumber \\
& \quad \log \mathbb{E}_{S_t}\big[W_S(y_t|f^t(y^{t-1}),S_t)\big] \bigg) \nonumber \\
&\leq \max_{x\in \setx}H\bigg(\mathbb{E}_{S}\big[W_S(\cdot|x,S)\big] \bigg) \nonumber\\
&=H\bigg(\mathbb{E}_{S}\big[W_S(\cdot|x^\star,S)\big] \bigg). \label{eq:BoundonEZ}
\end{align}
It follows from the definition of the RV $U_t$ in \eqref{eq:definitionOfUt} that
\begin{align}
    \mathbb{E}_{S^n}\big[ W_S^n(\sete^\star|\boldsymbol{f},S^n) \big] 
    & = \Pr\bigg \{ \sum_{t=1}^n \left(U_t+ \mathbb{E}[Z_t|Y^{t-1}]\right)\leq \log K_1 \bigg\} \nonumber\\
    &= \Pr\bigg \{ \sum_{t=1}^n U_t\leq \log K_1 - \sum_{t=1}^{n}\mathbb{E}[Z_t|Y^{t-1}] \bigg\} \nonumber\\
    & \overset{(a)}{=}  \Pr\bigg \{ \sum_{t=1}^n U_t\leq nH\big(\mathbb{E}_{S}\big[W_S(\cdot|x^\star,S)\big] \big) \nonumber \\
    & \quad +\alpha\sqrt{n} - \sum_{t=1}^{n}\mathbb{E}[Z_t|Y^{t-1}] \bigg\} \nonumber\\
    &\overset{(b)}{\geq} \Pr\bigg \{ \sum_{t=1}^n U_t\leq \alpha\sqrt{n} \bigg\},
\end{align}
where $(a)$ follows from the definition of $K_1$ in \eqref{eq:definitionOfK} and $(b)$ follows from \eqref{eq:BoundonEZ}. \\
It can be verified that 
\begin{equation}
    \text{Var}[U_t] \leq \beta, \quad \forall t=1,2,\ldots,n.
\end{equation}
Therefore, we can apply Chebyshev's inequality and obtain
\begin{align}
    \mathbb{E}_{S^n}\big[ W_S^n(\sete^\star|\boldsymbol{f},S^n) \big] & \geq  \Pr\bigg \{ \sum_{t=1}^n U_t\leq \alpha\sqrt{n} \bigg\} \nonumber\\
    & \overset{(a)}{\geq} 1-\mu, \nonumber
\end{align}
where $(a)$ follows the definition of $\beta$ in \eqref{eq:DefinitionofBeta}. This completes the proof of Lemma \ref{lemma3}.
\end{proof}

We establish an upper bound on the deterministic ID feedback rate for the channel model $W_S$ using Lemma \ref{lemma3}.
Let $\left\{(\boldsymbol{f}_i,\setd_i(s^n))_{s^n \in \sets^n},\quad i=1,\ldots,N \right\}$ be an $(n,N,\lambda,\lambda)$ deterministic ID feedback code for the channel $W_S$ with $\lambda \in (0,\frac{1}{2})$ and let $\mu \in (0,1)$ be chosen such that 
\begin{equation}
    1-\mu-\lambda<\frac{1}{2}. \label{eq:DefofMu}
\end{equation}
We define for each feedback strategy $\boldsymbol{f}_i$ the set $\sete_{i}$ that satisfies \eqref{eq:cardinalityOfE}. For $ i\in \{1,2,\ldots,N\}$, we have
\begin{align}
\mathbb{E}_{S^n}\big[ W_S^n(\setd_i(s^n)\cap\sete_i|\boldsymbol{f}_i,S^n) \big] &
=  1 - \mathbb{E}_{S^n}\big[ W_S^n( (\setd_i(s^n)\cap\sete_i)^c|\boldsymbol{f}_i,S^n) \big] \\
&=1- \mathbb{E}_{S^n}\big[ W_S^n((\setd_i(s^n))^c\cup \sete_i^c|\boldsymbol{f}_i,S^n) \big] \\
& \overset{(a)}{\geq} 1 - \mathbb{E}_{S^n}\big[ W_S^n((\setd_i(s^n))^c|\boldsymbol{f}_i,S^n) \big] - \mathbb{E}_{S^n}\big[ W_S^n(\sete_i^c|\boldsymbol{f}_i,S^n) \big] \\
& \overset{(b)}{\geq} 1-\lambda-\mu \\
&\overset{(c)}{>} \frac{1}{2},
\end{align}
where $(a)$ follows from the Union Bound, $(b)$ follows from the definition of the $(n,N,\lambda,\lambda)$ ID feedback code and from \eqref{eq:cardinalityOfE} and $(c)$ follows from \eqref{eq:DefofMu}.
Similarly, it follows from the definition of the ID feedback code 
\[ 
\left\{(\boldsymbol{f}_i,\setd_i(s^n))_{s^n \in \sets^n},\quad i=1,\ldots,N \right\},\] 
from \eqref{eq:cardinalityOfE} and \eqref{eq:DefofMu} that for $i \in \{1,2,\ldots,N\}$ and $i \neq j$
\begin{equation}
    \mathbb{E}_{S^n}\big[ W_S^n(\setd_j(s^n)\cap\sete_{j}|\boldsymbol{f}_i,S^n) \big] < \frac{1}{2}.
\end{equation}
As the error of the second kind $\lambda$ is smaller than $\frac{1}{2}$, all the sets $\setd_i(s^n)\cap\sete_{i}$ are distinct. Therefore, any $(n,N,\lambda,\lambda)$ deterministic ID feedback code $\left\{(\boldsymbol{f}_i,\setd_i(s^n))_{s^n \in \sets^n},\quad i=1,\ldots,N \right\}$ for the channel $W_S$ with $\lambda \in (0,\frac{1}{2})$ has an associated $(n,N,\lambda^\prime,\lambda^\prime)$ deterministic ID feedback code $\left\{(\boldsymbol{f}_i,\setd_i(s^n))_{s^n \in \sets^n} \cap \sete_{i},\quad i=1,\ldots,N \right\}$, where $\lambda^\prime \in (0,\frac{1}{2})$ and $\forall i=1,\ldots,N$ the set $\sete_{i}$ satisfies \eqref{eq:cardinalityOfE}. Thus, by Lemma \ref{lemma3}, the cardinality $N$ of the deterministic ID feedback code is upper-bounded as follows:
\begin{align}
N & \leq \sum_{k=0}^{K_1} {{|\sety|^n}\choose{k}} \leq \big (|\sety|^n \big )^{K_1} \\
& = 2^{n \log |\sety| 2^{n  H\big(\mathbb{E}[W_S(\cdot|x^\star,S)] \big) + \alpha \sqrt{n}}}.
\end{align}
This completes the converse proof of Theorem \ref{MainTheorem1}.\qed


\subsection{Direct Proof of Theorem \ref{MainTheorem2}}
Similarly, we consider the average channel $W_{S,avg}$ defined in \eqref{eq:averageChannel}. It is sufficient to show that $R=\max_{P\in\setp(\setx)}H(\sum_{x\in\setx}P(x)W_{S,avg}(\cdot|x))$ is an achievable randomized ID feedback rate for the average channel $W_{S,avg}$. The randomized ID feedback capacity of the average channel $C^{r}_{IDf}(W_{S,avg})$ can be obtained by applying \cite[Theorem 2]{identificationwithfeedback} on $W_{S,avg}$. If the transmission capacity $C(W_{S,avg})$ of $W_{S,avg}$ is positive, then we have
\begin{align}
    C^{r}_{IDf}(W_{S,avg})
    &\leq \max_{P\in\setp(\setx)}H\left(\sum_{x\in\setx}P(x)W_{S,avg}(\cdot|x)\right).
\end{align}
This completes the direct proof of Theorem \ref{MainTheorem2}.\qed

\subsection{Converse Proof of Theorem \ref{MainTheorem2}}

We first extend \cite[Lemma 4]{Idfeedback} (image size for a randomized feedback strategy) to the randomized ID feedback code for $W_S$.

\begin{Lemma}
    For any randomized feedback strategy $Q_F(\cdot)$ over all n-length feedback encoding set $\setf_n$ and any $\mu\in(0,1)$, 
    \begin{align}
        \min_{\sete_2\subset\sety^n:\sum_{\boldsymbol{f}\in\setf_n}Q_F(\boldsymbol{f})\mathbb{E}_{S^n}\left[W^n_S(\sete_2|\boldsymbol{f},S^n)\right]\geq 1-\mu}|\sete_2|\leq K_2,
    \end{align}
    where $K_{2}$ is given by 
    \begin{align}
        K_{2}=2^{n\max_{P\in\setp\left(\setx\right)}H\left(\sum_{x\in\setx}P(x)\mathbb{E}\left[W_S\left(\cdot|x,S\right)\right]\right)+\alpha\sqrt{n}}=2^{nH\left(\sum_{x\in\setx}P^{\star}(x)\mathbb{E}\left[W_S(\cdot|x,S)\right]\right)+\alpha\sqrt{n}},
    \end{align}
    where $\alpha=\sqrt{\frac{\beta}{\mu}}$, $\beta=\max{\log^2{3},\log^2{|\sety|}}$.
    \label{lemma:K2}
\end{Lemma}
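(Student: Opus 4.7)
The plan is to mirror the structure of the proof of Lemma \ref{lemma3}, replacing the role of the single input symbol $x^\star$ by the capacity-achieving input distribution $P^\star$. The extra layer is that now the channel input at time $t$ is itself random through $Q_F$, so the argument works with conditional input distributions induced by $Q_F$ given past outputs.

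First I would define the candidate image set
\begin{equation*}
\sete_2^\star=\Bigl\{y^n\in\sety^n:\ -\log\sum_{\boldsymbol{f}\in\setf_n}Q_F(\boldsymbol{f})\,\mathbb{E}_{S^n}\!\left[W_S^n(y^n|\boldsymbol{f},S^n)\right]\leq\log K_2\Bigr\}.
\end{equation*}
Since the expression under the logarithm is a probability distribution on $\sety^n$, a counting argument gives $|\sete_2^\star|\leq K_2$ directly. It then remains to show that the $\mu$-coverage condition is satisfied.

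Next I would introduce the induced input distributions. Writing $\Pr\{Y^n=y^n\}=\sum_{\boldsymbol{f}}Q_F(\boldsymbol{f})\,\mathbb{E}_{S^n}[W_S^n(y^n|\boldsymbol{f},S^n)]$ and conditioning sequentially, I define
\begin{equation*}
P_{X_t\mid Y^{t-1}}(x\mid y^{t-1})=\sum_{\boldsymbol{f}\in\setf_n:\,f^t(y^{t-1})=x}Q_F(\boldsymbol{f}\mid y^{t-1}),
\end{equation*}
so that $\Pr\{Y_t=y_t\mid Y^{t-1}=y^{t-1}\}=\sum_{x\in\setx}P_{X_t\mid Y^{t-1}}(x\mid y^{t-1})\,\mathbb{E}_{S_t}[W_S(y_t|x,S_t)]$. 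This is precisely a convex combination of the averaged kernels, and this is where the randomization in $Q_F$ enters. Mimicking the derivation around \eqref{eq:Z_t}, I then set
\begin{equation*}
Z_t=-\log\sum_{x\in\setx}P_{X_t\mid Y^{t-1}}(x\mid Y^{t-1})\,\mathbb{E}_{S_t}\!\left[W_S(Y_t|x,S_t)\right],
\end{equation*}
and observe that $-\log\Pr\{Y^n\}=\sum_{t=1}^n Z_t$, so that $\mathbb{E}_{S^n}\!\left[W_S^n(\sete_2^\star|\boldsymbol{f},S^n)\right]$ averaged over $Q_F$ becomes $\Pr\{\sum_{t=1}^n Z_t\leq\log K_2\}$.

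I would then proceed by the same martingale-style decomposition as in Lemma \ref{lemma3}: set $U_t=Z_t-\mathbb{E}[Z_t\mid Y^{t-1}]$, verify $\mathbb{E}[U_t\mid Y^{t-1}]=0$ and $\mathbb{E}[U_t]=0$, and bound the conditional mean by
\begin{equation*}
\mathbb{E}[Z_t\mid y^{t-1}]=H\!\Bigl(\sum_{x\in\setx}P_{X_t\mid Y^{t-1}}(x\mid y^{t-1})\,\mathbb{E}_{S}[W_S(\cdot|x,S)]\Bigr)\leq H\!\Bigl(\sum_{x\in\setx}P^\star(x)\,\mathbb{E}[W_S(\cdot|x,S)]\Bigr),
\end{equation*}
since the right-hand side is the maximum of this entropy functional over $\setp(\setx)$. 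Summing over $t$ gives $\sum_t\mathbb{E}[Z_t\mid Y^{t-1}]\leq nH(\sum_x P^\star(x)\mathbb{E}[W_S(\cdot|x,S)])=\log K_2-\alpha\sqrt{n}$. The variance bound $\mathrm{Var}[U_t]\leq\beta$ follows from the same worst-case estimate on $|\log\Pr\{Y_t\mid Y^{t-1}\}|$ as in Lemma \ref{lemma3} (the bound depends only on $|\sety|$, not on whether the input is deterministic or randomized), and Chebyshev's inequality applied to $\sum_t U_t$ then yields the coverage $\geq 1-\mu$ by the choice of $\alpha=\sqrt{\beta/\mu}$.

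The main obstacle is the clean definition of the conditional input distribution $P_{X_t\mid Y^{t-1}}$ induced by $Q_F$, and justifying that the per-step averaged output distribution really lies in the convex hull $\{\sum_x P(x)\mathbb{E}[W_S(\cdot|x,S)]:P\in\setp(\setx)\}$; once this is established, the entropy maximization inequality and hence the replacement of $\max_{x\in\setx}H(\mathbb{E}[W_S(\cdot|x,S)])$ by $\max_{P\in\setp(\setx)}H(\sum_x P(x)\mathbb{E}[W_S(\cdot|x,S)])$ is immediate, and the rest of the proof is a verbatim copy of Lemma \ref{lemma3}.
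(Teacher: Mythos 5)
Your proposal is correct and follows essentially the same route as the paper: the same threshold set $\sete_2^\star$, the same per-step log-likelihood variables $Z_t$, the same martingale differences $U_t=Z_t-\mathbb{E}[Z_t\mid Y^{t-1}]$, the same entropy bound by $\max_{P\in\setp(\setx)}H\bigl(\sum_{x}P(x)\mathbb{E}[W_S(\cdot|x,S)]\bigr)$, and Chebyshev's inequality. The only (minor, and arguably cleaner) difference is that you decompose $\Pr\{Y^n=y^n\}$ via the induced conditional input distributions $P_{X_t\mid Y^{t-1}}$, which works for arbitrary $Q_F$, whereas the paper assumes the product form $Q_F(\boldsymbol{f})=\prod_{t}Q^t(f^t)$ to interchange the sum over $\boldsymbol{f}$ with the product over $t$.
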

\begin{proof}
    We use a similar idea as for the proof of \cite[Lemma 4]{Idfeedback}. Define the set $\sete_{2}^{\star}\subset\sety^n$ as follows:
    \begin{align}
        \sete_{2}^{\star}=\left\{y^n\in\sety^{n}, -\log{\sum_{\boldsymbol{f}\in\setf_n}Q_F\left(\boldsymbol{f}\right)\mathbb{E}_{S^n}\left[W_S(y^n|\boldsymbol{f},S^n)\right]}\leq \log{K_2}\right\}.
    \end{align}

By the definition of $\sete_{2}^{\star}$ we have 
\[ \left|\sete_{2}^{\star}\right|\leq K_2.\] 
Then it is sufficient to show that $\sum_{\boldsymbol{f}\in\setf_n}Q_F(\boldsymbol{f})\mathbb{E}_{S^n}\left[W^n_S(\sete_2|\boldsymbol{f},S^n)\right]\geq 1-\mu$.
Let the RV $Z_t$ be defined as follows:
 \begin{equation}
     Z_t=-\log\left(\sum_{f^t\in\setf^t}Q^{t}(f^t)\mathbb{E}_{S_t}\left[W^n_S(Y_t|f^t(Y^{t-1}),S_t)\right]\right),
 \end{equation}
 where $Q_{F}(\boldsymbol{f})=\prod_{t=1}^n Q^{t}(f^t)$ and $\setf^t$ is the set of all mapping $\sety^{t-1}\mapsto\setx$.
We have
\begin{align}
    &\sum_{\boldsymbol{f}\in\setf_n}Q_F(\boldsymbol{f})\mathbb{E}_{S^n}\left[W_S^n\left(\sete_2|\boldsymbol{f},S^n\right)\right]\\
    &=Pr\left\{-\log\left({\sum_{\boldsymbol{f}\in\setf_n}Q_F\left(\boldsymbol{f}\right)\mathbb{E}_{S^n}\left[W_S(y^n|\boldsymbol{f},S^n)\right]}\right)\leq \log{K_2}\right\}\nonumber\\
    &=Pr\left\{-\log\left(\sum_{\boldsymbol{f}\in\setf_n}Q_F\left(\boldsymbol{f}\right)\prod_{t=1}^{n}\mathbb{E}_{S_t}\left[W_S(Y_t|f^t\left(Y^{t-1}\right),S_t\right]\right)\leq \log{K_2}\right\}\nonumber\\
    &=Pr\left\{-\log\left(\sum_{\boldsymbol{f}\in\setf_n}\prod_{t=1}^n Q^{t}\left(f^t\right)\prod_{t=1}^{n}\mathbb{E}_{S_t}\left[W_S(Y_t|f^t\left(Y^{t-1}\right),S_t\right]\right)\leq \log{K_2}\right\}\nonumber\\
    &=Pr\left\{-\log\left(\prod_{t=1}^n \sum_{f^t\in\setf^t}Q^{t}\left(f^t\right)\mathbb{E}_{S_t}\left[W_S(Y_t|f^t\left(Y^{t-1}\right),S_t\right]\right)\leq \log{K_2}\right\}\nonumber\\
    &=Pr\left\{\sum_{t=1}^{n}Z_t\leq \log{K_2}\right\}.
    \label{eq:rid1}
\end{align}
Now for any $y^{t-1}\in\sety^{t-1}$, we consider
\begin{align}
    &\mathbb{E}\left[Z_t|y^{t-1}\right]\\
    &=\sum_{y_t\in\sety}(-\sum_{f^t\in\setf^t}Q^{t}(f^t)\mathbb{E}_{S_t}[W^n_S(y_t|f^t(y^{t-1}),S_t)]\log(\sum_{f^t\in\setf^t}Q^{t}(f^t)\mathbb{E}_{S_t}[W^n_S(y_t|f^t(y^{t-1}),S_t)]))\nonumber\\
    &\leq H\left(\sum_{x\in\setx}P^{\star}(x)\mathbb{E}\left[W_S^n(\cdot|x,S\right]\right).
    \label{eq:rid2}
\end{align}
Similarly, for all $t\in\left\{1,2,\cdots,n\right\}$, we define RV $U_t=Z_t-\mathbb{E}\left[Z_t|Y^{t-1}\right]$. It has been shown in \eqref{eq:u_t1} and \eqref{eq:u_t2} that $\mathbb{E}\left[U_t|Y^{t-1}\right]=0$ and $\mathbb{E}\left[U_t\right]=0$. 
Combine \eqref{eq:rid1} and \eqref{eq:rid2}, we have
\begin{align}
    &\sum_{\boldsymbol{f}\in\setf_n}Q_F(\boldsymbol{f})\mathbb{E}_{S^n}\left[W_S^n\left(\sete_2|\boldsymbol{f},S^n\right)\right]\\
    &=Pr\left\{\sum_{t=1}^{n}\left(U_t+\mathbb{E}\left[Z_t|Y^{t-1}\right]\right)\leq \log{K_2}\right\}\nonumber\\
    &=Pr\left\{\sum_{t=1}^{n}U_t\leq nH\left(\sum_{x\in\setx}P^{\star}(x)\mathbb{E}\left[W_S(\cdot|x,S)\right]\right)+\alpha\sqrt{n}-\sum_{t=1}^n\mathbb{E}\left[Z_t|Y^{t-1}\right]\right\}\nonumber\\
    &\geq Pr\left\{\sum_{t=1}^{n}U_t\leq \alpha\sqrt{n}\right\}.
\end{align}
Assume for all $t=1,2,\cdots,n$, $Var[U_t]\le \beta$. By applying Chebyshev's inequality, we can obtain
\begin{align}
    \sum_{\boldsymbol{f}\in\setf_n}Q_F(\boldsymbol{f})\mathbb{E}_{S^n}\left[W_S^n\left(\sete_2|\boldsymbol{f},S^n\right)\right]
    &\geq 1-\frac{\beta}{\alpha^2}\nonumber\\
    &=1-\mu.
\end{align}
Replace $K_1$ in the converse proof of Theorem \ref{MainTheorem1} with the corresponding $K_2$ as outlined in Lemma \ref{lemma:K2}, completing the converse proof of Theorem \ref{MainTheorem2}.
\end{proof}

\subsection{Direct Proof of Theorem \ref{theorem:capDist1}}
\label{sec: direct JdIDAS}
\subsubsection{Coding Scheme}
We use some extent the same coding scheme elaborated in \cite{Idfeedback}.
 We choose as blocklength $m=n+\lceil \sqrt{n} \rceil$. Let $x^\star$ be some symbol in $\setx_D$. Regardless of which identity $i \in \{1,\ldots,N\}$ we want to identify, the sender first sends the sequence ${x^\star}^n=(x^\star,x^\star,\ldots,x^\star) \in \setx_D^n$ over the state-dependent channel $W_S$. The received sequence $y^n \in \sety^n$ becomes known to the sender (estimator and encoder) via the noiseless feedback link. The feedback provides the sender and the receiver with the knowledge of the outcome of the correlated random experiment $\bigg( \sety^n, \mathbb{E}_{S^n}\big[W_S^n(\cdot|{x^\star}^n,S^n) \big] \bigg)$. 
 \subsubsection{Common Randomness Generation}
 We want to generate uniform common randomness because it is the most convenient form of common randomness \cite{part2}. Therefore, we convert our correlated random experiment $\bigg( \sety^n, \mathbb{E}_{S^n}\big[W_S^n(\cdot|{x^\star}^n,S^n) \big] \bigg)$ to a uniform one $\bigg( \mathcal{T}^n, \mathbb{E}_{S^n}\big[W_S^n(\cdot|{x^\star}^n,S^n) \big] \bigg)$. For $\epsilon >0$, the set $\mathcal{T}^n$ is given by
 \begin{equation}
     \mathcal{T}^n= \bigcup_{V_S^a \colon \norm{V_S^a-W_S^a}\leq \epsilon} \mathcal{T}_{V_S^a}^n ({x^\star}^n), \label{eq:definitionOfTn_d}\end{equation}
where $\norm{V_S^a-W_S^a}$ is defined as follows:
\begin{Definition}Let $\mathcal{W}$ be the set of stochastic matrices $W\colon \setx \rightarrow \sety$. Let $W \in \mathcal{W}$ such that 
\begin{equation}
    P_{XY}(x,y)=P_X(x) W(y|x), \quad \forall x \in \setx, \forall y \in \sety.
\end{equation}
For $V,V^\prime \in \mathcal{W}$, the distance $|V-V^\prime|$ is defined as follows.
\begin{equation}
    \norm{V-V^\prime} = \max_{x \in \setx, y\in \sety} |V(y|x)-V^\prime(y|x)|.
\end{equation}
\end{Definition}
$V_S^a$ denotes the average channel defined by
\begin{equation}
    V_S^a=\sum_{s \in \sets} P_S(s) V_S(\cdot|\cdot,s).
\end{equation}
We introduce the following lemmas.
\begin{Lemma}{\cite{IT_CiKo}}
  Let $(x^n,y^n)$ be emitted by the DMS $P_{XY}(\cdot,\cdot)= W(\cdot|\cdot)P_X(\cdot)$ and let $V \in \mathcal{W}$ such that $|V-W|\leq \epsilon$. Then for every $\epsilon >0$, there exist $\delta^\prime>0$ and $n_0(\epsilon)$ such that for $n \geq n_0(\epsilon)$
    \begin{equation}
    \sum_{y^n \in \sett^n_V(x^n)} W^n(y^n|x^n) \geq 1-2^{-n \delta^\prime}.
    \end{equation} 
    \label{lemma:typ1}
\end{Lemma}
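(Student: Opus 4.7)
The plan is to reduce the claim to the standard concentration of the empirical conditional type around the true channel $W$, as developed via the method of types. I read $\sett^n_V(x^n)$ as the strongly $V$-typical conditional set: the $y^n$ whose joint type with $x^n$ differs from $P_X\cdot V$ by at most some fixed tolerance $\eta>\epsilon$. Under this reading, the hypothesis $\|V-W\|\le\epsilon$ says that this shell, inflated by $\eta$, already engulfs a genuine neighborhood of the true channel $W$, so a concentration bound for $W^n(\cdot|x^n)$ can be invoked.

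First I would fix the tolerance of $\sett^n_V(x^n)$ at some $\eta>\epsilon$ and use the triangle inequality for the total-variation distance on joint types: any $y^n$ that is strongly $W$-typical with tolerance $\eta-\epsilon$ is automatically strongly $V$-typical with tolerance $\eta$, since $\|V-W\|\le\epsilon$. This yields the inclusion $\sett^n_W(x^n)\subseteq\sett^n_V(x^n)$ (with the two tolerances just described), and it therefore suffices to lower-bound $W^n\bigl(\sett^n_W(x^n)\,\bigl|\,x^n\bigr)$.

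Second, I would obtain this lower bound by a direct Chernoff/Hoeffding argument. Conditionally on $x^n$, for every pair $(x,y)\in\setx\times\sety$ the indicator variables $\mathbf{1}\{Y_i=y\}$ over the positions $i$ with $x_i=x$ are i.i.d.\ Bernoullis with mean $W(y|x)$, so the empirical conditional frequency deviates from $W(y|x)$ by more than $\eta-\epsilon$ with probability at most $2^{-n\delta}$ for some exponent $\delta=\delta(\eta-\epsilon)>0$. A union bound over the finitely many pairs $(x,y)$ gives a total failure probability of at most $2|\setx||\sety|\cdot 2^{-n\delta}$, and for $n\ge n_0(\epsilon)$ the polynomial prefactor can be absorbed into any strictly smaller exponent $\delta'\in(0,\delta)$, completing the bound.

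The only real obstacle is bookkeeping the tolerances so that both the inclusion $\sett^n_W\subseteq\sett^n_V$ and the strict positivity of the resulting exponent $\delta'$ survive simultaneously; both are routine in the method-of-types literature. One may alternatively invoke Lemma~2.12 of Csisz\'ar--K\"orner directly, which packages exactly this statement.
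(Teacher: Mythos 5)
The paper does not prove this lemma at all: it is imported verbatim from Csisz\'ar--K\"orner (the cited reference), exactly as you note in your closing sentence. Your reconstruction is the standard argument behind that citation --- inclusion of typical sets via the triangle inequality on tolerances, Hoeffding for the conditional empirical frequencies, and a union bound over $(x,y)$ --- and it is sound. The one place where the ``bookkeeping'' genuinely matters is the Hoeffding step: if the empirical conditional frequency is normalized by $N(x\,|\,x^n)$, the exponent you get is proportional to $N(x\,|\,x^n)$ rather than to $n$, which is not $2^{-n\delta}$ for letters $x$ occurring $o(n)$ times; the standard fix is to define conditional typicality via $\frac{1}{n}\bigl|N(x,y\,|\,x^n,y^n)-N(x\,|\,x^n)V(y|x)\bigr|\le\eta$, under which Hoeffding does give an exponent linear in $n$ and rarely occurring letters satisfy the constraint automatically.
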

\begin{Lemma}{\cite{IT_CiKo}}
 Let $(x^n,y^n)$ be emitted by the DMS $P_{XY}(\cdot,\cdot)= W(\cdot|\cdot)P_X(\cdot)$ and let $V \in \mathcal{W}$. For every $\epsilon >0$ there exist a $c(\epsilon)>0$ and $n_0(\epsilon)$ such that for $n\geq n_0(\epsilon)$
\begin{enumerate}
\item $ | \underset{V \colon \norm{V-W} \leq \epsilon}{\bigcup} \sett^n_V(x^n) | \geq 2^{n (H(W|P_X) - c(\epsilon) )} $
\item $ | \underset{V \colon \norm{V-W} \leq \epsilon}{\bigcup} \sett^n_V(x^n) | \leq 2^{n (H(W|P_X) +c(\epsilon) )} $
\item if $\norm{V-W}\leq \epsilon$, $\sett_V^n(x^n) \neq \emptyset$ and $c(\epsilon) \to 0$ if $\epsilon \to 0$, then
\begin{equation}
    |\sett_V^n(x^n)| \geq 2^{n (H(W|P_X) - c(\epsilon) )}.
\end{equation}
\end{enumerate}
\label{lemma:typ2}
\end{Lemma}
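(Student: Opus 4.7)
The plan is to prove Lemma \ref{lemma:typ2} via the standard method-of-types machinery, which gives sharp combinatorial bounds on V-shell sizes in terms of the conditional entropy $H(V\mid P_X)$. The two ingredients I would rely on are: (i) the number of conditional types $V\colon \setx\to\sety$ arising from length-$n$ sequences is polynomial in $n$, at most $(n+1)^{|\setx||\sety|}$; and (ii) for any conditional type $V$ of $x^n$ with $\sett^n_V(x^n)\neq\emptyset$ one has
\begin{equation}
(n+1)^{-|\setx||\sety|}\cdot 2^{nH(V\mid P_X)} \;\leq\; |\sett^n_V(x^n)| \;\leq\; 2^{nH(V\mid P_X)}.
\end{equation}
Alongside these I would invoke the uniform continuity of conditional entropy: whenever $\norm{V-W}\leq\epsilon$, there exists $\delta(\epsilon)\to 0$ with $|H(V\mid P_X)-H(W\mid P_X)|\leq \delta(\epsilon)$ (e.g.\ via a Fannes-type estimate applied coordinate-wise in $x$).

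For part (3), I take any $V$ with $\norm{V-W}\leq\epsilon$ and non-empty shell, apply the entropy-continuity estimate to the lower half of the two-sided bound above, and absorb the polynomial prefactor $(n+1)^{-|\setx||\sety|}$ into an $o(n)$ exponent; this yields $|\sett^n_V(x^n)|\geq 2^{n(H(W\mid P_X)-c(\epsilon))}$ for a suitable $c(\epsilon)\to 0$. For part (1), the union on the left-hand side contains at least one such shell (pick a conditional type of $x^n$ closest to $W$, which is automatically within $\epsilon$ once $n\geq n_0(\epsilon)$ by density of types in $\mathcal{W}$), so the lower bound from (3) transfers verbatim to the union.

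For part (2), I would dominate the union by the polynomial sum over conditional types $V$ with $\norm{V-W}\leq\epsilon$:
\begin{equation}
\Big|\bigcup_{V\colon \norm{V-W}\leq\epsilon} \sett^n_V(x^n)\Big| \;\leq\; (n+1)^{|\setx||\sety|}\cdot \max_{V\colon\norm{V-W}\leq\epsilon}|\sett^n_V(x^n)| \;\leq\; (n+1)^{|\setx||\sety|}\cdot 2^{n(H(W\mid P_X)+\delta(\epsilon))},
\end{equation}
and then absorb the polynomial factor into a slightly larger $c(\epsilon)$. The main obstacle I expect is bookkeeping: ensuring that a single function $c(\epsilon)$ suffices uniformly for all three parts while also satisfying $c(\epsilon)\to 0$ as $\epsilon\to 0$. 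This reduces to quantifying the entropy continuity modulus explicitly and verifying that the polynomial contribution $|\setx||\sety|\log(n+1)/n$ is dominated by $\delta(\epsilon)$ once $n\geq n_0(\epsilon)$, which is a standard but somewhat fiddly choice of $n_0$ in terms of $\epsilon$.
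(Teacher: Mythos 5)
The paper offers no proof of this lemma at all; it is quoted as a known result from Csisz\'ar--K\"orner \cite{IT_CiKo}, and your method-of-types argument (polynomial bound on the number of conditional types, two-sided exact shell-size bounds, and continuity of conditional entropy to pass from $V$ to $W$) is precisely the standard derivation found there, so the approach matches what the citation stands in for. One imprecision worth repairing: the exact shell-size bounds are $(n+1)^{-|\setx||\sety|}2^{nH(V\mid P_{x^n})}\leq|\sett_V^n(x^n)|\leq 2^{nH(V\mid P_{x^n})}$ with $P_{x^n}$ the \emph{empirical type} of $x^n$, not $P_X$ itself, so to arrive at $H(W\mid P_X)$ in the exponent you must additionally use that a sequence emitted by the DMS has $P_{x^n}$ close to $P_X$ with high probability (or restrict the claim to such typical $x^n$) and apply your entropy-continuity estimate in the first argument as well as in $V$.
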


\begin{Lemma}\cite{Hoeffding}  \label{HoeffIneq}
 Let $\{X_i\}$ be i.i.d. RV's taking values in $[0,1]$ with mean $\mu$, then $\forall c >0$ with $p=\mu+c \leq 1$, 
 \begin{equation}
 \Pr\{\bar{X}_n - \mu \geq c \} \leq \exp({-n D(p||\mu)}) \leq \exp({-2nc^2}),
 \end{equation}
 where $\bar{X}_n = \frac{1}{n} \sum_{i=1}^{n} X_i$.
\end{Lemma}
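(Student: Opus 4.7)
The plan is to prove the two inequalities in the displayed bound separately: first the Chernoff/KL-divergence bound $\Pr\{\bar X_n-\mu\geq c\}\leq \exp(-nD(p\|\mu))$, and then the Pinsker-type estimate $D(p\|\mu)\geq 2c^2$ which gives the sub-Gaussian form. Both are standard, but it is worth laying out the structure carefully.

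For the first inequality I would use the exponential Markov (Chernoff) method. For any $t>0$, by independence,
\[
\Pr\{\bar X_n-\mu\geq c\}=\Pr\Bigl\{e^{t\sum_{i=1}^n X_i}\geq e^{tnp}\Bigr\}\leq e^{-tnp}\bigl(\mathbb{E}[e^{tX_1}]\bigr)^n.
\]
The key step is to bound the moment generating function of a $[0,1]$-valued random variable with mean $\mu$. By convexity of $x\mapsto e^{tx}$ on $[0,1]$, one has $e^{tx}\leq (1-x)+x e^t$ pointwise, hence $\mathbb{E}[e^{tX_1}]\leq 1-\mu+\mu e^t$. Substituting and optimizing the resulting upper bound over $t>0$, the minimizer is $t^\star=\log\frac{p(1-\mu)}{\mu(1-p)}$ (valid because $0<\mu<p\leq 1$), and plugging back yields exactly
\[
\Pr\{\bar X_n-\mu\geq c\}\leq \exp\bigl(-n D(p\|\mu)\bigr),
\]
with $D(p\|\mu)=p\log\frac{p}{\mu}+(1-p)\log\frac{1-p}{1-\mu}$. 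The edge cases $p=1$ or $\mu=0$ are handled by a direct limiting argument.

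For the second inequality it suffices to show $D(p\|\mu)\geq 2(p-\mu)^2$ for all $p,\mu\in[0,1]$ with $p\geq\mu$. I would set $g(p):=D(p\|\mu)-2(p-\mu)^2$, viewed as a function of $p$ with $\mu$ fixed. A direct computation gives $g(\mu)=0$, $g'(p)=\log\frac{p(1-\mu)}{\mu(1-p)}-4(p-\mu)$ so that $g'(\mu)=0$, and
\[
g''(p)=\frac{1}{p(1-p)}-4\geq 0,
\]
because $p(1-p)\leq 1/4$ on $[0,1]$. Thus $g$ is convex with a critical point at $p=\mu$, so $g(p)\geq 0$ for all $p\in[\mu,1]$, giving $D(p\|\mu)\geq 2c^2$ and hence $\exp(-nD(p\|\mu))\leq \exp(-2nc^2)$.

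The main obstacle, such as it is, is the second inequality: the MGF/Chernoff manipulation is standard, whereas the Pinsker-type bound requires recognizing the convexity of $g$ and the uniform bound $p(1-p)\leq 1/4$. Once this is in place, the two pieces combine to give the claimed two-sided chain $\Pr\{\bar X_n-\mu\geq c\}\leq \exp(-nD(p\|\mu))\leq \exp(-2nc^2)$, completing the proof.
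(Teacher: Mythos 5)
Your proof is correct, but note that the paper offers no proof of this lemma at all: it is quoted verbatim from Hoeffding's 1963 paper, and your argument (Chernoff bound with the convexity estimate $e^{tx}\leq(1-x)+xe^{t}$, optimized at $t^{\star}=\log\tfrac{p(1-\mu)}{\mu(1-p)}$, followed by the binary Pinsker bound $D(p\|\mu)\geq 2(p-\mu)^2$ via convexity of $g$) is exactly the standard derivation from that reference. The only point worth flagging is that all logarithms in your computation must be natural logarithms for the chain $\Pr\{\bar X_n-\mu\geq c\}\leq\exp(-nD(p\|\mu))\leq\exp(-2nc^2)$ to hold as stated, which conflicts with the paper's blanket base-2 convention but is clearly what is intended.
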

\label{app:proofConcavity}
For arbitrary $D_{min} \leq D_1 \leq D_2$, $\setx_{D_1}$ and $\setx_{D_2}$ are defined by
\begin{align}
   \setx_{D_1} &=\{x \in \setx,\quad d^\star(x) \leq D_1\} \\
   \setx_{D_2} & =\{x \in \setx,\quad d^\star(x) \leq D_2\}.
\end{align}
It is clear that $g(D)$ is a non-decreasing function because $\setx_{D_1} \subseteq \setx_{D_2}$ for arbitrary $D_1 \leq D_2$. Let $\mu \in (0,1)$, we have
\begin{align}
   g\big(\mu D_1 + (1-\mu)D_2\big) &= \max_{x \in \setx_{\mu D_1 + (1-\mu)D_2}}  H\big(\mathbb{E}[W_S(\cdot|x,S)] \big) 
\end{align}

It follows from Lemma \ref{lemma:typ2} that $\mathbb{E}_{S^n}\big[W_S^n(\cdot|{x^\star}^n,S^n) \big] $ is essentially uniform on $\mathcal{T}^n$. Let the set $\sety^n -\sett^n$ be denoted by $\sete^\star$. By Lemma \ref{lemma:typ1}, we have
\begin{align}
   \Pr\{ Y^n \in \sete^\star|X^n=({x^\star}^n,{x^\star}^n,\ldots,{x^\star}^n) \} 
   & = 1-  \Pr\{ Y^n \notin \sett^n|X^n=({x^\star}^n,{x^\star}^n,\ldots,{x^\star}^n) \}\nonumber\\
    & \leq 2^{-n \delta^\prime}. \label{eq:errorPb}
\end{align}
  As we mentioned earlier, we have $|\sett^n|\approx 2^{n \mathbb{E}_S\big[W_S(\cdot|{x^\star}^n,S^n) \big]}$. This quantity is the size of the correlated random experiment $(\sett^n,\mathbb{E}_{S^n}\big[W_S^n(\cdot|{x^\star}^n,S^n) \big])$, which determines the growth of the ID rate. Let $\mathcal{C}=\{({\boldsymbol{u}}_j,\setd_j),\ j=1,\ldots,M\}$ be an $(\lceil \sqrt{n} \rceil,M,2^{-\sqrt{n} \delta})$ code, where ${\boldsymbol{u}}_j \in \setx_D^{\lceil \sqrt{n} \rceil}$ for each $j=1,\ldots,M$. We concatenate the sequence ${x^\star}^n=(x^\star,x^\star,\ldots,x^\star) \in \setx_D^n$ and the transmission code $\mathcal{C}$ to build an $(m,N,\lambda_1,\lambda_2)$ ID feedback code $\Cp=\{(\boldsymbol{f}_i,\setd'_i), \ i=1,\ldots,N\}$ for $W_S$. We have $\lambda_1,\lambda_2<\lambda, \quad \lambda \in (0,\frac{1}{2})$. The concatenation is done using the  coloring functions, $\{F_i,\ i=1,\ldots,N \}$. We choose a suitable set of coloring functions $\{F_i, \quad i=1,\dots,N\}$ randomly. Every coloring function $F_i\colon  \sett^n \longrightarrow \{1,\ldots,M \} $ corresponds to an ID message $i$ and maps each element $y^n \in \sett^n $ to an element $F_i(y^n)$ in a smaller set $\{1,\ldots,M\}$. After $y^n \in \sett^n$ has been received by the sender (encoder and estimator) via the noiseless feedback channel, the encoder sends $\boldsymbol{u}_{F_i(y^n))}$, if $i \in \{1,\ldots,N\}$ is given to him. Note that we defined for each coloring function $F_i$ an encoding strategy $\boldsymbol{f}_i=[f_i^1,\ldots,f_i^m] \in \mathcal{F}_m$ as presented in Definition \ref{def:RIDf}.
 If $y^n \notin \sett^n$, an error is declared. This error probability goes to zero as $n$ goes to infinity as computed in \eqref{eq:errorPb}. For a fixed family of maps $\{F_i,\ i=1,\ldots,N\}$ and for each $i \in \{1,\ldots,N\}$, we define the decoding sets $\setd(F_i)=\bigcup_{y^n\in \sett^n}\{y^n\}\times \setd_{F_i(y^n)}$.
 \subsubsection{Error Analysis}
 Now, we analyze the maximal error performance of the deterministic ID feedback code. 
For the analysis of the error of the first kind, we choose a fixed set $\{F_i,\ i=1,\ldots,N\}$. The error of the first kind is upper-bounded by
  \begin{align}
   \mathbb{E}_{S^m}[W^m_{S}({\setd'}_i^c|\boldsymbol{f}_i,S^m)] & 
  = \mathbb{E}_{S^m}[W^m_{S}({{\setd}(F_i)}^c|\boldsymbol{f}_i,S^m)] \nonumber \\
    & \overset{(a)}{\leq}  \sum_{s^m \in \sets^m}  P_S^m(s^m) ( W^n_{S} \big(  (\sett^n)^c | {x^\star}^n, s^n \big)  +  W^{\lceil \sqrt{n} \rceil}_S(\setd_{F_i(y^n)}|\boldsymbol{u}_{F_i(y^n)},s^{\lceil \sqrt{n} \rceil}) ) \nonumber \\
    & \overset{(b)}{\leq} 2^{-n\delta} +2^{-\sqrt{n}\delta^\prime} , 
    \end{align} 
where $(a)$ follows from the memorylessness property of the channel and the union bound and $(b)$ follows from Lemma \ref{lemma:typ1} and the definition of the transmission code $\mathcal{C}$. \\
 In order to achieve a small error of the second kind, we choose suitable maps $\{F_i,\ i=1,\ldots,N\}$ randomly. For $i \in \{1,\ldots,N\}$, $y^n \in \sett^n$ let $\bar{F}_i(y^n)$ be independent RVs such that
 \begin{equation}
    \Pr\{\bar{F}_i(y^n)=j\} = \frac{1}{M}, \quad j \in \{1,\ldots,M\}. \label{eq:mean_of_Fi}
    \end{equation} 
Let $F_1$ be a realization of $\bar{F}_1$. For each $y^n\in \sett^n$, we define the RVs $\psi_{y^n}=\psi_{y^n}(\bar{F}_2)$ analogously to \cite[Section IV]{Idfeedback}.
\begin{equation}
\psi_{y^n}=\psi_{y^n}(\bar{F}_2)=\begin{cases}
1, & \text{if } F_1(y^n)=\bar{F}_2(y^n)\\
0, & \text{otherwise}.
\end{cases}
\end{equation}
The $\psi_{y^n}$ are also independent for every $y^n \in \sett^n$. The expectation of $\psi_{y^n}$ is computed as follows:
\begin{align}
\mathbb{E}[\psi_{y^n}] &  
 = \Pr\{ F_1(y^n)=\bar{F}_2(y^n) \}
 = \frac{1}{M}. \label{eq:mean_Phi}
\end{align}
Since for all $y^n \in \sett^n$, the $\psi_{y^n}$ are i.i.d., we obtain by applying Hoeffding's inequality \ref{HoeffIneq} the following Lemma:
\begin{Lemma} \label{lemmahof}
For $\lambda \in (0,1)$, $\frac{1}{M} < \lambda$, for each channel $V^a_S$ with $\norm{V_S^a-W_S^a} \leq \epsilon$ and for each $n \geq n_0(\epsilon)$
\begin{align}
    \Pr \{ \sum_{y^n \in \sett^n} \psi_{y^n} > |\mathcal{T}_{V_S^a}^n ({x^\star}^n)| \cdot \lambda \}  \leq 2^{-{|\sett^n|}\cdot \lambda\sqrt{n} \epsilon}
\end{align} 
\end{Lemma}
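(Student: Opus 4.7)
The plan is to apply Hoeffding's inequality (Lemma~\ref{HoeffIneq}) directly to the sum $\sum_{y^n \in \sett^n} \psi_{y^n}$, using the i.i.d.\ structure established in the lines preceding the lemma. From the definition $\psi_{y^n} = \mathbf{1}\{F_1(y^n) = \bar{F}_2(y^n)\}$ and the fact that the random colorings $\bar{F}_2(y^n)$ are assigned independently across distinct sequences $y^n$ according to \eqref{eq:mean_of_Fi}, the random variables $\{\psi_{y^n}\}_{y^n\in\sett^n}$ form a collection of independent Bernoulli variables, each with mean $1/M$ by \eqref{eq:mean_Phi}. Consequently, with $N := |\sett^n|$, the normalized sum $\bar{\psi} := N^{-1}\sum_{y^n \in \sett^n}\psi_{y^n}$ has mean $1/M$ and takes values in $[0,1]$, satisfying the hypothesis of Lemma~\ref{HoeffIneq}.

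Next, I would rewrite the event of interest in the form demanded by Hoeffding's inequality. Setting the deviation
\[
c \;=\; \frac{|\mathcal{T}^n_{V_S^a}({x^\star}^n)|}{N}\,\lambda \;-\; \frac{1}{M},
\]
the target event becomes $\{\bar{\psi} - 1/M \geq c\}$. The assumption $1/M < \lambda$ together with the size estimates in Lemma~\ref{lemma:typ2} (which control both $N = |\sett^n|$ and $|\mathcal{T}^n_{V_S^a}({x^\star}^n)|$ up to factors of $2^{\pm n c(\epsilon)}$ with $c(\epsilon)\to 0$ as $\epsilon \to 0$) should allow me to lower-bound $c$ by a quantity of order $\lambda$ (up to subexponential corrections). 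Applying Hoeffding's inequality then yields
\[
\Pr\{\bar{\psi} - 1/M \geq c\} \;\leq\; \exp(-2 N c^2),
\]
and after rewriting in base~2 the right-hand side becomes $2^{-2 N c^2 / \ln 2}$.

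Finally, I would absorb the subexponential slack into the exponent by invoking the hypothesis $n \geq n_0(\epsilon)$, so that the ratio $|\mathcal{T}^n_{V_S^a}({x^\star}^n)|/N$ is sufficiently close to the corresponding size ratio guaranteed by Lemma~\ref{lemma:typ2}, and $c$ is close enough to $\lambda$ to conclude
\[
2 N c^2 / \ln 2 \;\geq\; N\,\lambda\,\sqrt{n}\,\epsilon,
\]
which gives the claimed bound $2^{-|\sett^n|\cdot \lambda \sqrt{n}\,\epsilon}$. This is essentially the same randomization-and-concentration step used in \cite[Lemma~5]{Idfeedback}; the only new ingredient is that the ``source'' generating the common randomness is now the average channel $W_{S,\text{avg}}$ rather than the channel $W$ of the DMC-with-feedback setting.

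The main obstacle is not Hoeffding itself but the bookkeeping that matches the deviation $c$ to the specific exponent $\lambda\sqrt{n}\epsilon$ appearing in the statement. In particular, one has to argue that $|\mathcal{T}^n_{V_S^a}({x^\star}^n)|/|\sett^n|$ is bounded away from zero uniformly over the channels $V_S^a$ in the $\epsilon$-ball around $W_S^a$, which requires using parts~(1) and (3) of Lemma~\ref{lemma:typ2} together with the continuity estimate $c(\epsilon)\to 0$; once this is in hand, the choice $n \geq n_0(\epsilon)$ ensures that the sublinear corrections are dominated by the gap $\lambda - 1/M$, so that the Hoeffding exponent exceeds $|\sett^n|\,\lambda\sqrt{n}\,\epsilon\,\ln 2$ as required.
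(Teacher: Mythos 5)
Your setup is the same as the paper's: the $\psi_{y^n}$ are independent Bernoulli variables with mean $1/M$, and the lemma is meant to follow from a single application of Lemma~\ref{HoeffIneq}. However, your quantitative plan has a genuine gap: you invoke the sub-Gaussian form $\Pr\{\bar{\psi}-1/M\geq c\}\leq\exp(-2Nc^2)$ with $N=|\sett^n|$ and then claim $2Nc^2/\ln 2\geq N\lambda\sqrt{n}\epsilon$. Since $\bar{\psi}\in[0,1]$ forces $c\leq 1$, the left side is at most $2N/\ln 2$, while the right side grows like $N\sqrt{n}$; the inequality is therefore false for all large $n$, no matter how the ``subexponential corrections'' are absorbed. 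The factor $\sqrt{n}$ in the exponent of the lemma is not a correction term to be dominated by the gap $\lambda-1/M$ --- it is the main term, and it cannot be produced by the $\exp(-2nc^2)$ tail.

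The step that actually works is the \emph{first} inequality in Lemma~\ref{HoeffIneq}, namely $\Pr\{\bar{X}_n-\mu\geq c\}\leq\exp(-nD(p\|\mu))$ with $p=\mu+c$. Taking $\mu=1/M$ and $p$ of order $\lambda$, one has
\begin{equation*}
D(p\,\|\,1/M)\;\geq\;\lambda\log(M\lambda)-\log e,
\end{equation*}
and the crucial point is that $M$ is the size of the second-stage transmission code over $\lceil\sqrt{n}\rceil$ channel uses, so $\log M$ grows like $\sqrt{n}$ times a positive constant. Hence $D(p\|1/M)\geq\lambda\sqrt{n}\epsilon$ for $n\geq n_0(\epsilon)$ and $\epsilon$ small enough relative to the rate of $\mathcal{C}$, which is exactly what yields the exponent $|\sett^n|\cdot\lambda\sqrt{n}\epsilon$. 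You should also be more careful with your choice of normalization: you divide the threshold $|\mathcal{T}^n_{V_S^a}({x^\star}^n)|\lambda$ by $|\sett^n|$, and by Lemma~\ref{lemma:typ2} the ratio $|\mathcal{T}^n_{V_S^a}({x^\star}^n)|/|\sett^n|$ can be as small as $2^{-2nc(\epsilon)}$, so your deviation $c$ is not ``of order $\lambda$'' uniformly; restricting the sum to $y^n\in\mathcal{T}^n_{V_S^a}({x^\star}^n)$ (as in the paper's subsequent error analysis) avoids this issue.
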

 We derive an upper bound on the error of the second kind for those values of $\bar{F}_2$ satisfying Lemma \ref{lemmahof}. 
 \begin{align}
&\mathbb{E}_{S^m} [W^m_{S}(\setd(\bar{F}_2)|\boldsymbol{f}_1,S^m)] \\
& = \sum_{s^m \in \sets^m} P_S^m(s^m) W^m_{S}(\setd(\bar{F}_2)|\boldsymbol{f}_1,s^m) \nonumber\\
& = \sum_{s^m \in \sets^m} P_S^m(s^m) W^m_{S} \bigg (\setd(\bar{F}_2) \cap \big( (\sett^n \times \sety^{\lceil \sqrt{n} \rceil})
\cup (\sett^n \times \sety^{\lceil \sqrt{n} \rceil})^c \big)|f_1,s^m \bigg) \nonumber\\
&\overset{(a)}{\leq} \sum_{s^m \in \sets^m} P_S^m(s^m) W^m_{S}\big( (\sett^n \times \sety^{\lceil \sqrt{n} \rceil})^c|f_1,s^m \big) \nonumber\\
& \quad + \sum_{s^m \in \sets^m} P_S^m(s^m) W^m_{S}\big( \setd(\bar{F}_2) \cap (\sett^n \times \sety^{\lceil \sqrt{n} \rceil})|f_1,s^m \big) \nonumber \\
 & \overset{(b)}{\leq} \sum_{s^n \in \sets^n} P_S^n(s^n) W^n_{S}\big( (\sett^n)^c|{x^{\star}}^n,s^n \big) \nonumber \\
& + \sum_{s^m \in \sets^m} P_S^m(s^m) \bigg( \sum_{ \substack{ y^n \in \sett^n\\ F_1(y^n) \neq \bar{F}_2(y^n) }} W_S^n(y^n|{x^{\star}}^n,s^n) 
\cdot W_S^{\lceil \sqrt{n} \rceil} ( y^{\lceil \sqrt{n} \rceil}|\boldsymbol{u}_{F_1(y^n)}, s^{\lceil \sqrt{n} \rceil} ) \nonumber \\
& \quad + \sum_{ \substack{ y^n \in \sett^n\\ F_1(y^n) = \bar{F}_2(y^n) }} W_S^n(y^n|{x^{\star}}^n,s^n) \bigg) \nonumber \\
& \leq 2^{-n \delta^\prime} + 2^{-\sqrt{n}\delta} 
+ \sum_{s^n \in \sets^n} P_S^n(s^n)  \sum_{ \substack{ y^n \in \sett^n\\ F_1(y^n) = \bar{F}_2(y^n) }} W_S^n(y^n|{x^{\star}}^n,s^n) \nonumber \\
& \overset{(c)}{\leq} 2^{-n \delta^\prime} + 2^{-\sqrt{n}\delta} \nonumber \\
& \quad + \sum_{V_S^a \colon |V_S^a-W_S^a| \leq \epsilon} ({W_S^a})^n( \mathcal{T}_{V_S^a}^n ({x^\star}^n)|{x^\star}^n) 
 \cdot \big (\sum_{y^n \in \mathcal{T}_{V_S^a}^n ({x^\star}^n) } \psi_{y^n} \big) \cdot |\mathcal{T}_{V_S^a}^n ({x^\star}^n)|^{-1} \nonumber \\
& \overset{(d)}{\leq}  2^{-n \delta^\prime} + 2^{-\sqrt{n}\delta} + \lambda,
\label{eq:error2_Bound}
\end{align}
where $(a)$ follows from the union bound, $(b)$ follows from the memorylessness property of the channel and the union bound, $(c)$ follows from Lemma \ref{lemma:typ1}, from the definition of the transmission code $\mathcal{C}$ and from the definition of the set $\sett^n$ in \eqref{eq:definitionOfTn_d} and $(d)$ follows from Lemma \ref{lemmahof}.

 We repeatedly do the same analysis of the error of the second kind for all pairs $(i_1,i_2) \in \{1,\ldots,N\}^2,\quad i_1 \neq i_2$. 
 \label{app:CisanIDcode}
For notation simplicity, we denote the error of the second kind between the pair $(i_1,i_2)$ by $\mu_2^{(i_1,i_2)}$. We have
 \begin{align}
 \Pr\{\Cp \text{ is not an }(n,N,\lambda_1,\lambda_2) \text{ code} \} &
 = \Pr\{ \bigcup_ {\substack{i_1,i_2 \in \{1,\ldots,N\} \nonumber \\ i_1\neq i_2} \mu_2^{(i_1,i_2)}}   \geq \lambda_2 \}  \nonumber\\
& = \Pr\{ \bigcup_ {\substack{i_1,i_2 \in \{1,\ldots,N\} \nonumber \\ i_1\neq i_2} }  \mu_2^{(i_1,i_2)} \geq \lambda + 2^{- n\delta^\prime}+2^{-\sqrt{n}\delta} \} \nonumber \\
& \overset{(a)}{\leq} N\cdot(N-1) \cdot 2^{-{|\sett^n|}\cdot \lambda\sqrt{n} \epsilon},
\end{align}
 where $(a)$ follows from the union bound, \eqref{eq:error2_Bound} and Lemma \ref{lemmahof}.
 It can be verified that we can construct an ID feedback code for $W_S$ with cardinality $N$ satisfying
 \begin{equation}
     N \geq (n+1)^{-2 |\setx||\sety|} \cdot 2^{|\sett^n| \cdot \lambda\sqrt{n}\epsilon}
 \end{equation}
 and with an error of the second kind upper-bounded as in \eqref{eq:error2_Bound}. 

 The next step in the proof is dedicated to the state estimator. The per symbol distortion defined in \eqref{eq:persymbolDistortion} can be rewritten as follows:
 \begin{align}
    d_t
     &=\mathbb{E}\left[d(S_t,\hat{S}_t)\right]\\
     &=\mathbb{E}\left[\mathbb{E}\left[d(S_t,\hat{S}_t)|X_t,Y_t\right]\right]\\
     &=\sum_{(x,y)\in\setx\times\sety}P_{X_{t}Y_{t}}(x,y)\sum_{s_{t}\in\sets}P_{S_{t}|X_{t}Y_{t}}(s|x,y)\sum_{\hat{s}_{t}\in\sets}P_{\hat{S}_{t}|X_{t}Y_{t}}(\hat{s}|x,y)d(s,\hat{s})\\
     &=\sum_{(x,y)\in\setx\times\sety}P_{X_{t}Y_{t}}(x,y)\sum_{s_{t}\in\sets}P_{S_t|X_tY_t}(s|x,y)d(s,h^\star(x,y))\\
     &=\sum_{x\in\setx}P_{X_t}(x)\mathbb{E}_{S_tY_t}\left[d(S_t,h^\star(X_t,Y_t))|X_t=x\right]\\
     &=\sum_{x\in\setx}P_{X_t}(x)d^\star(x)\\
     &\leq D.
 \end{align}
 This completes the direct proof of Theorem \ref{theorem:capDist1}. \qed
\subsection{Converse Proof of Theorem \ref{theorem:capDist1}}\label{sec:rIDS}
For the converse proof, we use the techniques for deterministic ID for $W_S$ as described in \ref{sec_converse_IDS}. We first extend the Lemma \ref{lemma3} to the joint deterministic ID and sensing problem. 
\begin{Lemma} For any feedback strategy $\boldsymbol{f}_{D}=\left[f_{D}^1,f_{D}^2,\cdots,f_{D}^n\right]$ which satisfies the \emph{per symbol distortion} constraint as described in \eqref{eq:minimalDis1}, i.e., for all $t\in\left\{1,2,\cdots,n\right\}$, $d^{\star}(f_{D}^t)\leq D$, and for any $\mu\in\left(0,1\right)$, we have
\begin{align}
    \min_{\sete_3\subset\sety^n:\mathbb{E}_{S^n}\left[W^n_S(\sete_3|\boldsymbol{f}_D,S^n)\right]\geq 1-\mu}\left|\sete_3\right|\leq K_3,
\end{align}
where $K_3$ is given by
\begin{align}
    K_3=2^{n\max_{x\in\setx_D}H(\mathbb{E}\left[W_S(\cdot|x)\right])+\alpha\sqrt{n}}=2^{nH(\mathbb{E}\left[W_S(\cdot|x^{\star})\right])+\alpha\sqrt{n}},
\end{align}
where
\begin{align}
    \alpha=\sqrt{\frac{\beta}{\mu}},\quad \beta=\max\left(\log^2(3),\log^2\left(\left|\sety\right|\right)\right).
\end{align}
\label{lemma:K3}
\end{Lemma}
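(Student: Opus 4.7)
The plan is to mirror the proof of Lemma \ref{lemma3} essentially line by line, with the only substantive change being that the supporting entropy maximization is restricted to the subset $\setx_D$. First I would define the candidate set $\sete_3^\star \subset \sety^n$ by
$$\sete_3^\star = \Big\{ y^n \in \sety^n : -\log \mathbb{E}_{S^n}\big[ W_S^n(y^n|\boldsymbol{f}_D, S^n)\big] \leq \log K_3 \Big\},$$
so that $|\sete_3^\star| \leq K_3$ immediately. It then suffices to verify that $\mathbb{E}_{S^n}[W_S^n(\sete_3^\star|\boldsymbol{f}_D, S^n)] \geq 1-\mu$.

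Next, following the derivation of \eqref{eq:Z_t}, I would introduce $Z_t = -\log \mathbb{E}_{S_t}[W_S(Y_t|f_D^t(Y^{t-1}), S_t)]$ and use the product structure of $W_S^n$ together with independence of the state sequence across time to rewrite the above probability as $\Pr\{\sum_{t=1}^n Z_t \leq \log K_3\}$. Setting $U_t = Z_t - \mathbb{E}[Z_t|Y^{t-1}]$ produces a martingale difference sequence with $\mathbb{E}[U_t|Y^{t-1}] = 0$ and $\mathbb{E}[U_t] = 0$, exactly as in \eqref{eq:u_t1}--\eqref{eq:u_t2}, and one checks $\mathrm{Var}[U_t] \leq \beta$ by the same boundedness argument as in Lemma \ref{lemma3}. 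The only place where the distortion constraint enters is the pointwise bound on $\mathbb{E}[Z_t|y^{t-1}]$: since $\boldsymbol{f}_D$ satisfies $d^\star(f_D^t) \leq D$ for every $t$, every realization $f_D^t(y^{t-1})$ lies in $\setx_D$, hence
$$\mathbb{E}[Z_t|y^{t-1}] = H\big( \mathbb{E}_S[W_S(\cdot|f_D^t(y^{t-1}),S)] \big) \leq \max_{x \in \setx_D} H\big( \mathbb{E}_S[W_S(\cdot|x,S)] \big) = H\big( \mathbb{E}_S[W_S(\cdot|x^\star,S)] \big).$$
Combined with the definition of $K_3$, this yields $\sum_{t=1}^n \mathbb{E}[Z_t|Y^{t-1}] \leq n H(\mathbb{E}_S[W_S(\cdot|x^\star,S)])$, so that $\mathbb{E}_{S^n}[W_S^n(\sete_3^\star|\boldsymbol{f}_D,S^n)] \geq \Pr\{\sum_t U_t \leq \alpha\sqrt{n}\}$. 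Applying Chebyshev's inequality with $\alpha = \sqrt{\beta/\mu}$ then delivers the claimed lower bound $1-\mu$.

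The only real subtlety will be arguing cleanly that the hypothesis ``$d^\star(f_D^t) \leq D$ for every $t$'' forces $f_D^t(y^{t-1}) \in \setx_D$ for every relevant history $y^{t-1}$ (rather than merely on average over $y^{t-1}$), so that the per-history entropy bound above is justified uniformly and not just in expectation; once this interpretation of the distortion constraint is pinned down, the remainder of the argument is a verbatim adaptation of Lemma \ref{lemma3}, and no additional technical difficulty is expected. Finally, as noted at the end of the statement of Lemma \ref{lemma:K2} in the randomized case, once $K_3$ replaces $K_1$ in the converse argument of Section~\ref{sec_converse_IDS}, the upper bound on the cardinality of any deterministic ID feedback code satisfying the distortion constraint follows by the same counting step, which will subsequently complete the converse of Theorem \ref{theorem:capDist1}.
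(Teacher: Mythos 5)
Your proposal matches the paper's own proof essentially line by line: the same definition of $\sete_3^{\star}$, the same random variables $Z_t$ and $U_t$ reusing \eqref{eq:Z_t}, \eqref{eq:u_t1}, \eqref{eq:u_t2}, the same entropy bound restricted to $\setx_D$ as in \eqref{eq:didas2}, and the same Chebyshev step. The subtlety you flag about $f_D^t(y^{t-1})\in\setx_D$ holding per history rather than on average is treated implicitly (i.e., not discussed) in the paper as well, so your argument is as complete as the published one.
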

\begin{proof}
    Let $\sete_3^{\star}\subset\sety^n$ be defined as follows:
    \begin{align}
        \sete_3^{\star}=\left\{y^n\in\sety^n,-\log{\mathbb{E}_{S^n}\left[W_S^n\left(\boldsymbol{f}_D,S^n\right)\right]\leq K_3}\right\}.
    \end{align}
    Define a RV $Z_t=-\log{\mathbb{E}_{S_t}\left[W_S(Y_t|f^t_D(Y^{t-1}),S_t)\right]}$.
By \eqref{eq:Z_t} we have
\begin{align}
    \mathbb{E}_{S^n}\left[W_S^n\left(\sete^{\star}_3|\boldsymbol{f}_{D},S^n\right)\right]=Pr\left\{\sum_{t=1}^n Z_t\leq \log{K_3}\right\}.\label{eq:didas1}
\end{align}
Similarly, for all $t\in\left\{1,2,\cdots,n\right\}$, we define RV $U_t=Z_t-\mathbb{E}\left[Z_t|Y^{t-1}\right]$. It has been shown in \eqref{eq:u_t1} and \eqref{eq:u_t2} that $\mathbb{E}\left[U_t|Y^{t-1}\right]=0$ and $\mathbb{E}\left[U_t\right]=0$. 

Moreover, for all $t\in\left\{1,2,\cdots,n\right\}$ and for all $y^{t-1}\in\sety^{t-1}$, we have
\begin{align}
    \mathbb{E}\left[Z_t|y^{t-1}\right]
    &=\sum_{y_t\in\sety}\left(-\mathbb{E}_{S_t}\left[W_S(Y_t|f^t_D(Y^{t-1}),S_t)\right]\log{\mathbb{E}_{S_t}\left[W_S(Y_t|f^t_D(Y^{t-1}),S_t)\right]}\right)\nonumber\\
    &\leq \max_{x\in\setx_D}H\left(\mathbb{E}_S\left[W_S(\cdot|x,S)\right]\right)\nonumber\\
    &= H\left(\mathbb{E}_{S}\left[W_S(\cdot|x^{\star},S)\right]\right).\label{eq:didas2}
\end{align}
By combing \eqref{eq:didas1} and \eqref{eq:didas2} we have
\begin{align}
    \mathbb{E}_{S^n}\left[W_S^n\left(\boldsymbol{f}_{D},S^n\right)\right]
    &\geq Pr\left\{\sum_{t=1}^n{U_t}\leq \alpha\sqrt{n}\right\}\nonumber\\
    &\geq 1-\mu.  
\end{align}
This completes the proof of Lemma \ref{lemma:K3}
\end{proof}
The subsequent proof steps are identical to Section \ref{sec_converse_IDS}.
\subsection{Direct Proof of Theorem \ref{theorem:capDist2}}
For the direct proof of this theorem, we follow a code construction similar to that presented in \cite{Idfeedback}, with one key difference: here, we optimize only over input distributions that satisfy the per-symbol constraint $d^\star(P)\leq D$. 
\subsubsection{Coding Scheme}
We construct a randomized ID code  with blocklength $m=n+\sqrt{n}$ by concatenating two transmission codes, which will be described in details subsequently. The first $n$ symbols are allocated for generation of common randomness. We employ distribution 
\[
P^*=\arg\max_{P\in\setp(\setx)}H(\sum_{x\in\setx}P(x)\mathbb{E}\left[W_S(\cdot|x,S))\right],\] 
where the maximization is performed over distributions subject to the constraint $\setp_{D}=\left\{P\in\setp(\setx)|d^{\star}(P)\le D\right\}$. Regardless of which identity $i \in \{1,\ldots,N\}$ we want to identify, the sender first sends $n$ symbols with respect to the distribution ${P^\star}^n\in \setp_D^n$ over the state-dependent channel $W_S$. The received sequence $y^n \in \sety^n$ becomes known to the sender (estimator and encoder) via the noiseless feedback link. The feedback provides the sender and the receiver with the knowledge of the outcome of the correlated random experiment $\bigg( \sety^n, \sum_{x^n\in\setx^n}P^{\star n}(x^n)\mathbb{E}_{S^n}\big[W_S^n(\cdot|x^n,S^n) \big] \bigg)$. 

\subsubsection{Common Randomness Generation}
Similar as the deterministic coding scheme, we want to generate uniform common randomness. Therefore, we convert our correlated random experiment $$\bigg( \sety^n, \sum_{x^n\in\setx^n}P^{\star n}(x^n)\mathbb{E}_{S^n}\big[W_S^n(\cdot|{x^\star}^n,S^n) \big] \bigg)$$ to a uniform one $\bigg( \mathcal{T}'^n, \sum_{x^n\in\setx^n}P^{\star n}(x^n)\mathbb{E}_{S^n}\big[W_S^n(\cdot|{x^\star}^n,S^n) \big] \bigg) \bigg)$. For $\epsilon >0$, the set $\mathcal{T}'^n$ is given by
 \begin{equation}
     \sett'^n= \bigcup_{V_S^a \colon \norm{V_S^a-W_S^a}\leq \epsilon} \sum_{x\in\setx}P^{\star n}(x^n)\mathcal{T}_{V_S^a}^n (x^n), \label{eq:definitionOfTn_r}\end{equation}

By Lemma \ref{lemma:typ2}, we can get the following corollary:

\begin{Corollary}
 Let $(x^n,y^n)$ be emitted by the DMS $P_{XY}(\cdot,\cdot)= W(\cdot|\cdot)P_X(\cdot)$ and let $V \in \mathcal{W}$. For every $\epsilon >0$ there exist a $c(\epsilon)>0$ and $n_0(\epsilon)$ such that for $n\geq n_0(\epsilon)$
\begin{enumerate}
\item $ | \underset{V \colon \norm{V-W} \leq \epsilon}{\bigcup} \sett^n_V(x^n) | \geq 2^{n (H(\sum_{x\in\setx}P_X(x)W(\cdot|x)) - c(\epsilon) )} $
\item $ | \underset{V \colon \norm{V-W} \leq \epsilon}{\bigcup} \sett^n_V(x^n) | \leq 2^{n (H(\sum_{x\in\setx}P_X(x)W(\cdot|x)) +c(\epsilon) )} $
\item if $\norm{V-W}\leq \epsilon$, $\sett_V^n(x^n) \neq \emptyset$ and $c(\epsilon) \to 0$ if $\epsilon \to 0$, then
\begin{equation}
    |\sett_V^n(x^n)| \geq 2^{n (H(\sum_{x\in\setx}P_X(x)W(\cdot|x)) - c(\epsilon) )}.
\end{equation}
\end{enumerate}
\label{lema:typ2r}
\end{Corollary}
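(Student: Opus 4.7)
The corollary is the natural randomized-encoder analog of Lemma \ref{lemma:typ2}, tailored to the construction of $\sett'^n$ in \eqref{eq:definitionOfTn_r}: whereas Lemma \ref{lemma:typ2} bounds the size of $\bigcup_V \sett^n_V(x^n)$ by the conditional entropy $H(W\mid P_X)$ for a single fixed typical $x^n$, the corollary expresses the size of the analogous union, now aggregated over typical inputs $x^n$ drawn from $P_X$, in terms of the marginal output entropy $H(P_Y)$ with $P_Y := \sum_{x\in\setx} P_X(x)W(\cdot\mid x)$. My plan is to transfer the method-of-types machinery underlying Lemma \ref{lemma:typ2} from conditional type classes to marginal type classes of the output.

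For the upper bound, I would argue that every $y^n \in \bigcup_V \sett^n_V(x^n)$ arising from a typical $x^n$ (of empirical type $\epsilon$-close to $P_X$) has marginal empirical type close to $P_Y$, by a joint-typicality chase: $\norm{V-W}\leq \epsilon$ together with $\norm{P_{x^n}-P_X}\leq\epsilon$ imply that the marginal type $P_{y^n}$ is close to $P_Y$ in total variation. The standard type-counting bound on $\{y^n : \norm{P_{y^n} - P_Y} \leq \epsilon'\}$ then yields the upper bound $2^{n(H(P_Y)+c(\epsilon))}$. For the lower bound, I would invoke Lemma \ref{lemma:typ1} to ensure that $\sett^n_V(x^n)$ with $V$ close to $W$ carries at least $1-2^{-n\delta'}$ of the mass of $W^n(\cdot\mid x^n)$, and then use the fact that under $P_Y^n$ the probability mass concentrates on sequences of type close to $P_Y$, each of which has individual probability at most $2^{-n(H(P_Y)-c(\epsilon))}$. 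A counting argument then forces the cardinality of the union to be at least $2^{n(H(P_Y)-c(\epsilon))}$. The non-emptiness assertion (item 3) follows directly from the corresponding claim in Lemma \ref{lemma:typ2}, since any non-empty conditional type class already meets the stated lower bound.

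The principal technical obstacle is the bookkeeping of error terms: the chain of approximations $\norm{V-W}\leq \epsilon \Rightarrow \norm{P_{y^n}-P_Y}\leq \epsilon' \Rightarrow |H(P_{y^n})-H(P_Y)|\leq c(\epsilon)$ must be controlled by a single $\epsilon$-dependent constant that vanishes as $\epsilon\to 0$. This follows from continuity of entropy via a Fannes-type inequality together with Sanov-type concentration of the empirical input type around $P_X$, but the chain must be assembled so that the final $c(\epsilon)$ is independent of $n$ and consistent with the constant appearing in Lemma \ref{lemma:typ2}, so that it can be threaded through the subsequent Hoeffding step in the direct proof of Theorem \ref{theorem:capDist2} without introducing $n$-dependent slack.
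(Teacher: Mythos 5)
Your proposal is correct in substance, but it is worth being explicit that it does something the paper does not: the paper offers no proof at all, merely the phrase ``By Lemma \ref{lemma:typ2}, we can get the following corollary,'' and that invocation cannot literally deliver the stated bounds. For a single fixed $x^n$ of type close to $P_X$, Lemma \ref{lemma:typ2} gives $|\bigcup_V \sett^n_V(x^n)| \doteq 2^{nH(W|P_X)}$, and since $H(W|P_X) \leq H\big(\sum_x P_X(x)W(\cdot|x)\big)$ with strict inequality whenever $I(X;Y)>0$, items 1 and 3 of the corollary are false as written for a fixed $x^n$. Your reading --- that the union must also be taken over (typical) input sequences $x^n$, as the definition of $\sett'^n$ in \eqref{eq:definitionOfTn_r} intends --- is the correct repair, and under that reading your argument is the right one: the upper bound follows from counting output sequences whose empirical type is within $\epsilon'$ of $P_Y=\sum_x P_X(x)W(\cdot|x)$, and the lower bound follows because the aggregated union captures all but a vanishing fraction of the mass of $P_Y^n$ while each of its elements has probability at most $2^{-n(H(P_Y)-c(\epsilon))}$. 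Your concern about assembling the chain $\norm{V-W}\leq\epsilon \Rightarrow \norm{P_{y^n}-P_Y}\leq\epsilon' \Rightarrow |H(P_{y^n})-H(P_Y)|\leq c(\epsilon)$ into a single $n$-independent $c(\epsilon)$ is legitimate but routine (uniform continuity of entropy on a finite alphabet suffices). The one cosmetic point to fix is that item 3, restricted to a single nonempty $\sett^n_V(x^n)$, should retain the conditional-entropy lower bound $2^{n(H(W|P_X)-c(\epsilon))}$ from Lemma \ref{lemma:typ2}; the marginal-entropy lower bound holds only for the aggregated union.
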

Therefore, we have $\left|\sett'^n\right|\approx 2^{nH\left(\sum_{x\in\setx}{P(x)\mathbb{E}_S\left[W(\cdot|x,S)\right]}\right)}$. The sender generate randomness according to the random experiment $\left(\sett'^n,\sum_{x^n\in\setx^n}P^{\star n}(x^n)\mathbb{E}_{S^n}\left[W^n_S(\cdot|x^n,S^n)\right]\right)$. Asymptotically, the error probability of $y^n\notin \sett'^n$ goes to zero. Similar as the deterministic scheme, we prepare the coloring functions $\left\{F_i, i=1,\cdots,N\right\}$. The last $\lceil\sqrt{n}\rceil$ symbols are used to transmit $F_i(y^n)$ using a standard $\left(\lceil\sqrt{n}\rceil,M,2^{-n\sqrt{n}\delta}\right)$ transmission code $C=\left\{\left(\boldsymbol{u_j},\setd_j\right),j=1,\cdots,M\right\}$, where $\boldsymbol{u}_j\in\setx_D^{\lceil\sqrt{n}\rceil}$ for each $j=1,\cdots,M$. The probability distribution for encoding is defined as $Q_F(\cdot|i)={P^{\star}}^n\times\mathbb{I}\left\{\boldsymbol{x_n^m}=F_i(y^n)\right\}$ and decoding region is given by $\setd(F_i)=\bigcup_{y^n\in\sett'}\left\{y^n\right\}\times\setd(F_i(y^n))$. 

\subsubsection{Error Analysis}
Following that, for all $i=1,\cdots,N$, the error of the first kind $P_{e,1}(i)$ is upper-bounded by
\begin{align}
    P_{e,1}(i)
    &=\sum_{\boldsymbol{f}\in\setf^m}Q_F(\boldsymbol{f}|i)\mathbb{E}_{S^m}\left[W_S^m(\setd(F_i)|\boldsymbol{f},S^m)\right]\\
    &=\sum_{s^m\in\sets^m}P_S^m(s^m)\sum_{\boldsymbol{f}\in\setf^m}Q_F(\boldsymbol{f}|i)W_S^m\left(\left(\bigcup_{y^n\in\sett'}y^n\times \setd_{F_i(y^n)}\right)^c|\boldsymbol{f},S^m\right)\\
    &\overset{(a)}{\leq} \sum_{s^m\in\sets^m}P_S^m(s^m)\left(\sum_{x^n\in\setx^n}{P^{\star}}^n(x^n)W_S^n(\sett'^c|x^n,S^n)+W_S^{\lceil\sqrt{n}\rceil}\left(\setd_{F_i(y^n)}^c|\boldsymbol{u}_{F_i(y^n)},S^{\lceil\sqrt{n}\rceil}\right)\right)\\
    &\overset{(b)}{\leq} 2^{-n\delta}+2^{-\sqrt{n}\delta'},
\end{align}
where $(a)$ follows from the union bound and $(b)$ follows from Corollary \ref{lema:typ2r}.
Furthermore, for all $i,j=1,\cdots,N$ with $i\neq j$, the probability of error of the second kind $P_{e,2}(i,j)$ should be asymptotically upper-bounded by $\lambda$. Without the loss of generalization, we fix $i=1$, $j=2$ and examine the error probability $P_{e,2}(1,2)$
 \begin{align}
 P_{e,2}(1,2)
 &=\sum_{\boldsymbol{f}\in\setf^m}Q(\boldsymbol{f}|i=1)\mathbb{E}_{S^m} [W^m_{S}(\setd(\bar{F}_2)|\boldsymbol{f},S^m)] \nonumber\\
& = \sum_{s^m \in \sets^m} P_S^m(s^m) \sum_{\boldsymbol{f}\in\setf^m} Q(\boldsymbol{f}|i=1) W^m_{S}(\setd(\bar{F}_2)|\boldsymbol{f},s^m) \nonumber\\
& = \sum_{s^m \in \sets^m}  P_S^m(s^m)\sum_{\boldsymbol{f}\in\setf^m}Q(\boldsymbol{f}|i=1)\cdot\nonumber\\
&\quad \cdot  W^m_{S} \bigg (\setd(\bar{F}_2) \cap \big( (\sett'^n \times \sety^{\lceil \sqrt{n} \rceil})
\cup (\sett'^n \times \sety^{\lceil \sqrt{n} \rceil})^c \big)|\boldsymbol{f},s^m \bigg) \nonumber\\
&\overset{(a)}{\leq} \sum_{s^m \in \sets^m} P_S^m(s^m) \sum_{\boldsymbol{f}\in\setf^m}Q(\boldsymbol{f}|i=1)W^m_{S}\big( (\sett^n \times \sety^{\lceil \sqrt{n} \rceil})^c|\boldsymbol{f},s^m \big) \nonumber\\
& \quad + \sum_{s^m \in \sets^m} P_S^m(s^m)\sum_{\boldsymbol{f}\in\setf^m} Q(\boldsymbol{f}|i=1) W^m_{S}\big( \setd(\bar{F}_2) \cap (\sett^n \times \sety^{\lceil \sqrt{n} \rceil})|\boldsymbol{f},s^m \big) \nonumber \\
 & \overset{(b)}{\leq} \sum_{s^m \in \sets^m} P_S^m(s^m)\sum_{x^n\in\setx^n}{P^{\star}}^n\left(x^n\right) W^n_{S}\big( (\sett^n)^c|x^n,s^n \big)W^{\lceil\sqrt{n}\rceil}_S\left(\setd_{F_i(y^n)}^c|\boldsymbol{u}_{F_i(y^n)},s^{\lceil\sqrt{n}\rceil}\right) \nonumber \\
& + \sum_{s^m \in \sets^m} P_S^m(s^m) \bigg( \sum_{ \substack{ y^n \in \sett^n\\ F_1(y^n) \neq \bar{F}_2(y^n) }} \sum_{x^n\in\setx^n}{P^{\star}}^n(x^n) W_S^n(y^n|x^n,s^n) \cdot
 \nonumber \\
& \quad \cdot W_S^{\lceil \sqrt{n} \rceil} ( y^{\lceil \sqrt{n} \rceil}|\boldsymbol{u}_{F_1(y^n)}, s^{\lceil \sqrt{n} \rceil} )+ \sum_{ \substack{ y^n \in \sett^n\\ F_1(y^n) = \bar{F}_2(y^n) }}\sum_{x^n\in\setx^n}{P^{\star}}^n(x^n) W_S^n(y^n|x^n,s^n) \bigg) \nonumber \\
& \leq 2^{-n \delta^\prime} + 2^{-\sqrt{n}\delta} 
+ \sum_{s^n \in \sets^n} P_S^n(s^n)  \sum_{ \substack{ y^n \in \sett^n\\ F_1(y^n) = \bar{F}_2(y^n) }} \sum_{x^n\in\setx^n}{P^{\star}}^n(x^n)W_S^n(y^n|x^n,s^n) \nonumber \\
& \overset{(c)}{\leq} 2^{-n \delta^\prime} + 2^ + \sum_{x^n\in\setx^n}{P^{\star}}^n(x^n)\sum_{V_S^a \colon |V_S^a-W_S^a| \leq \epsilon} ({W_S^a})^n( \sett{'}_{V_S^a}^n (x^n)|x^n) \cdot \nonumber \\
& \quad 
 \cdot \big(\sum_{y^n \in \mathcal{T}{'}_{V_S^a}^n (x^n) } \psi_{y^n} \big) \cdot |\mathcal{T}{'}_{V_S^a}^n (x^n)|^{-1} \nonumber \\
& \overset{(d)}{\leq}  2^{-n \delta^\prime} + 2^{-\sqrt{n}\delta} + \lambda,
\end{align}
where $(a)$ follows from the union bound, $(b)$ follows from the memeoryless channel and the union bound, $(c)$ follows from Lemma \ref{lemma:typ1}, from the definition of the transmission code $\mathcal{C}$ and from the definition of the set $\sett^n$ in \eqref{eq:definitionOfTn_d} and $(d)$ follows from Corollary \ref{lema:typ2r}.\\
 We repeatedly do the same analysis of the error of the second kind for all pairs $(i_1,i_2) \in \{1,\ldots,N\}^2,\quad i_1 \neq i_2$. We can verify that we can construct a randomized ID feedback code for $W_S$ with cardinality $N$ satisfying
 \begin{equation}
     N \geq (n+1)^{-2 |\setx||\sety|} \cdot 2^{|\sett'^n| \cdot \lambda\sqrt{n}\epsilon}
 \end{equation}
 and with an error of the first kind and second kind upper-bounded as in \eqref{eq:error1r} and \eqref{eq:error2r}, respectively. 

 Finally the state estimator is checked. The per symbol distortion defined in \eqref{eq:persymbolDistortion} can be rewritten as follows:
 \begin{align}
    d_t
     &=\mathbb{E}\left[d(S_t,\hat{S}_t)\right]\\
     &=\mathbb{E}\left[\mathbb{E}\left[d(S_t,\hat{S}_t)|X_t,Y_t\right]\right]\\
     &=\sum_{(x,y)\in\setx\times\sety}P_{X_{t}Y_{t}}(x,y)\sum_{s_{t}\in\sets}P_{S_{t}|X_{t}Y_{t}}(s|x,y)\sum_{\hat{s}_{t}\in\sets}P_{\hat{S}_{t}|X_{t}Y_{t}}(\hat{s}|x,y)d(s,\hat{s})\\
     &=\sum_{(x,y)\in\setx\times\sety}P_{X_{t}Y_{t}}(x,y)\sum_{s_{t}\in\sets}P_{S_t|X_tY_t}(s|x,y)d(s,h^\star(x,y))\\
     &=\sum_{x\in\setx}P_{X_t}(x)\mathbb{E}_{S_tY_t}\left[d(S_t,h^\star(X_t,Y_t))|X_t=x\right]\\
     &=d^{\star}(P)\\
     &\leq D.
 \end{align}
 This completes the direct proof of Theorem \ref{theorem:capDist2}. \qed

\subsection{Converse Proof of \ref{theorem:capDist2}}
We first extend the Lemma \ref{lemma:K2} to the joint randomized ID and sensing problem.
\begin{Lemma}
    For any randomized feedback strategy $Q_{D}(\boldsymbol{f})=\prod_{t=1}^n Q^{t}_{D}(f^t)$ over all n-length feedback encoding set $\setf_n$, which satisfies the per symbol distortion constraint as described in \eqref{eq:didas2}, i.e., for all $t\in\{1,2,\cdots,n\}$, $d^{\star}\left(Q_{D}^{t}(f^t)\right)\leq D$, and for any $\mu\in(0,1)$, 
    \begin{align}
        \min_{\sete_4\subset\sety^n:\sum_{\boldsymbol{f}\in\setf_n}Q_D(\boldsymbol{f})\mathbb{E}_{S^n}\left[W^n_S(\sete_4|\boldsymbol{f},S^n)\right]\geq 1-\mu}|\sete_4|\leq K_4,
    \end{align}
    where $K_{4}$ is given by 
    \begin{align}
        K_{4}=2^{n\max_{P\in\setp^D}H\left(\sum_{x\in\setx}P(x)\mathbb{E}\left[W_S\left(\cdot|x,S\right)\right]\right)+\alpha\sqrt{n}}=2^{nH\left(\sum_{x\in\setx}P^{\star}(x)\mathbb{E}\left[W_S(\cdot|x,S)\right]\right)+\alpha\sqrt{n}},
    \end{align}
    where $\alpha=\sqrt{\frac{\beta}{\mu}}$, $\beta=\max{\log^2{3},\log^2{|\sety|}}$.
    \label{lemma:K4}
    \end{Lemma}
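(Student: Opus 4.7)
The plan is to adapt the proof of Lemma \ref{lemma:K2} by restricting the relevant optimization to the distortion-feasible set $\setp_D$, in the same spirit in which Lemma \ref{lemma:K3} specializes Lemma \ref{lemma3} to $\setx_D$. First I would define
\begin{equation*}
\sete_4^\star = \Bigl\{ y^n \in \sety^n : -\log\Bigl(\sum_{\boldsymbol{f}\in\setf_n} Q_D(\boldsymbol{f})\,\mathbb{E}_{S^n}\bigl[W_S^n(y^n|\boldsymbol{f},S^n)\bigr]\Bigr) \leq \log K_4 \Bigr\},
\end{equation*}
so that $|\sete_4^\star|\leq K_4$ by construction. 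What remains is to show that $\sum_{\boldsymbol{f}\in\setf_n} Q_D(\boldsymbol{f})\,\mathbb{E}_{S^n}[W_S^n(\sete_4^\star|\boldsymbol{f},S^n)] \geq 1-\mu$.

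Next I would reuse the telescoping step \eqref{eq:rid1} from the proof of Lemma \ref{lemma:K2}. Using the product form $Q_D(\boldsymbol{f})=\prod_{t=1}^n Q_D^t(f^t)$, the negative log of the averaged transition probability decomposes as a sum of the random variables
\begin{equation*}
Z_t = -\log\Bigl(\sum_{f^t\in\setf^t} Q_D^t(f^t)\,\mathbb{E}_{S_t}\bigl[W_S(Y_t|f^t(Y^{t-1}),S_t)\bigr]\Bigr),
\end{equation*}
yielding $\sum_{\boldsymbol{f}} Q_D(\boldsymbol{f})\mathbb{E}_{S^n}[W_S^n(\sete_4^\star|\boldsymbol{f},S^n)] = \Pr\bigl\{\sum_{t=1}^n Z_t\leq \log K_4\bigr\}$. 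As in the previous lemmas, I would introduce the centered variables $U_t=Z_t-\mathbb{E}[Z_t|Y^{t-1}]$, which satisfy $\mathbb{E}[U_t|Y^{t-1}]=0$ and $\text{Var}[U_t]\leq\beta$ by the arguments leading to \eqref{eq:u_t1} and \eqref{eq:u_t2}.

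The new ingredient, and the step I expect to be the main obstacle, is the analogue of \eqref{eq:rid2}: I must establish that $\mathbb{E}[Z_t|y^{t-1}]\leq H\bigl(\sum_{x\in\setx}P^\star(x)\mathbb{E}[W_S(\cdot|x,S)]\bigr)$ where the maximizer $P^\star$ now ranges over the constrained set $\setp_D$ rather than all of $\setp(\setx)$. Here the distortion hypothesis on $Q_D^t$ is what must be leveraged: for each history $y^{t-1}$, the induced input distribution $P_{X_t|y^{t-1}}(x)=\sum_{f^t\in\setf^t:\,f^t(y^{t-1})=x}Q_D^t(f^t)$ inherits the constraint $d^\star(P_{X_t|y^{t-1}})\leq D$ from the assumption on $Q_D^t$, hence lies in $\setp_D$. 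Consequently the conditional entropy of the averaged output distribution at time $t$ equals $H\bigl(\sum_x P_{X_t|y^{t-1}}(x)\mathbb{E}[W_S(\cdot|x,S)]\bigr)$ and is therefore bounded by $H\bigl(\sum_x P^\star(x)\mathbb{E}[W_S(\cdot|x,S)]\bigr)$ with $P^\star\in\setp_D$.

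Plugging this entropy bound back into the telescoped probability and invoking the definition of $K_4$ reduces the claim to $\Pr\bigl\{\sum_{t=1}^n U_t\leq \alpha\sqrt{n}\bigr\}\geq 1-\mu$, which follows from Chebyshev's inequality together with $\alpha=\sqrt{\beta/\mu}$. The cardinality bound $N$ for randomized ID feedback codes under the distortion constraint then follows by substituting $K_4$ for $K_2$ in the converse argument of Theorem \ref{MainTheorem2}, completing the converse direction of Theorem \ref{theorem:capDist2}.
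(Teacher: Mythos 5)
Your proposal follows essentially the same route as the paper's proof: the same definition of $\sete_4^\star$, the same decomposition into the per-letter random variables $Z_t$ via \eqref{eq:rid1}, the same centering $U_t = Z_t - \mathbb{E}[Z_t|Y^{t-1}]$, the same entropy bound over $\setp_D$ in place of $\setp(\setx)$, and the same Chebyshev conclusion. In fact, your explicit justification that the history-induced input distribution $P_{X_t|y^{t-1}}$ inherits the constraint $d^\star(P_{X_t|y^{t-1}})\leq D$ and hence lies in $\setp_D$ is a detail the paper's version of \eqref{eq:jridas2} asserts without spelling out, so your write-up is, if anything, slightly more complete.
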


    \begin{proof}
        Define a set $\sete^{\star}_4\subset$ as follows:
        \begin{align}
        \sete_{4}^{\star}=\left\{y^n\in\sety^{n}, -\log{\sum_{\boldsymbol{f}\in\setf_n}Q_D\left(\boldsymbol{f}\right)\mathbb{E}_{S^n}\left[W_S(y^n|\boldsymbol{f},S^n)\right]}\leq \log{K_4}\right\}.
    \end{align}
    Define an auxiliary RV $Z_t$ as follows:
    \begin{equation}
     Z_t=-\log\left(\sum_{f^t\in\setf^t}Q_{D}^{t}(f^t)\mathbb{E}_{S_t}\left[W^n_S(Y_t|f^t(Y^{t-1}),S_t)\right]\right).
 \end{equation}
By \eqref{eq:rid1} we have
\begin{align}
    \sum_{\boldsymbol{f}\in\setf_n}Q_D(\boldsymbol{f})\mathbb{E}\left[W_S^n\left(\sete_4|\boldsymbol{f},S^n\right)\right]=Pr\left\{\sum_{t=1}^nZ_t\leq \log{K_4}\right\}.
\end{align}
Similarly, for any $y^{t-1}\in\sety^{t-1}$, we examine
\begin{align}\nonumber
    &\mathbb{E}\left[Z_t|y^{t-1}\right]\\
    &=\sum_{y_t\in\sety}(-\sum_{f^t\in\setf^t}Q_D^{t}(f^t)\mathbb{E}_{S_t}[W^n_S(y_t|f^t(y^{t-1}),S_t)]\log(\sum_{f^t\in\setf^t}Q_D^{t}(f^t)\mathbb{E}_{S_t}[W^n_S(y_t|f^t(y^{t-1}),S_t)]))\nonumber\\
    &\leq \max_{P\in\setp_D}H\left(\sum_{x\in\setx}P(x)\mathbb{E}\left[W_S(\cdot|x,S)\right]\right)\nonumber\\
    &= H\left(\sum_{x\in\setx}P^{\star}(x)\mathbb{E}\left[W_S^n(\cdot|x,S\right]\right).
    \label{eq:jridas2}
\end{align}
Applying Chebyshev's inequality, we complete the proof of Lemma \ref{lemma:K4}.
\end{proof}
The proof steps that follow are the same as those in Section \ref{sec:rIDS}.

\section{Average Distortion}\label{average}
In addition to the per-symbol distortion constraint, an alternative and more flexible distortion constraint is average distortion. This approach is valuable because it relaxes the per-symbol fidelity requirement, allowing for minor variations in individual symbol quality as long as the overall average distortion remains below a specified threshold. As defined in \cite{labidi2023joint}, the average distortion for a sequence of symbols is given by
\begin{align}
    \bar{d}^n
    &=\mathbb{E}_{S^n\hat{S}^n}\left[d(S^n,\hat{S}^n)\right]\nonumber\\
    &=\frac{1}{n}\sum_{t=1}^{n}\mathbb{E}_{S_t,\hat{S}_t}\left[d(S_t,\hat{S}_t)\right].
\end{align}
 This metric captures the average quality of the reconstructed sequence, making it suitable for applications where consistent, strict fidelity for each symbol is not essential but the overall fidelity of the transmission needs to remain within acceptable limits.

In the case of a deterministic ID code, the average distortion can be expressed in a more detailed form:
\begin{equation}
    \bar{d}^n=\frac{1}{N} \sum_{i =1}^{N} \mathbb{E}_{S^nY^n} \big[\frac{1}{n} \sum_{t=1}^{n} d(S_t,\hat{S}_t)|X^n=\boldsymbol{f}_i\big].\label{eq:averageDistortion}
\end{equation}
 Using the code construction method from Section \ref{sec: direct JdIDAS} along with the minimum distortion condition defined in \eqref{eq:minimalDis1}, we propose the following theorem, which provides a lower bound on the deterministic ID capacity-distortion function for a state-dependent channel $W_S$ under an average distortion constraint:
\begin{Theorem}
    The deterministic ID capacity-distortion function with average distortion constraint $\bar{d}^n\leq D$ of the state-dependent channel $W_S$ is lower-bounded as follows:
\begin{equation}
    C_{ID,avg}^d (D) \geq \max_{x \in \setx_D}  H\big(\mathbb{E}[W_S(\cdot|x,S)] \big),
\end{equation}
where the set $\setx_D$ is given by
\begin{equation}
    \setx_D=\{x \in \setx,\quad d^\star(x) \leq D\}.
\end{equation}
\label{theorem:capDist_avg}
\end{Theorem}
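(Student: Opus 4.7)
The proof plan hinges on a simple observation: the per-symbol distortion constraint $d_t\leq D$ for all $t\in\{1,\ldots,n\}$ is strictly stronger than the average distortion constraint $\bar d^{\,n}\leq D$, since
\[
\bar d^{\,n}=\frac{1}{n}\sum_{t=1}^{n} d_t\leq \max_{1\leq t\leq n} d_t\leq D.
\]
Therefore any code that achieves $(R,D)$ under the per-symbol criterion automatically achieves $(R,D)$ under the (weaker) average criterion, and one obtains the desired lower bound by inheriting the achievability scheme already developed for Theorem \ref{theorem:capDist1}.

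Concretely, I would reuse verbatim the construction from Section \ref{sec: direct JdIDAS}. Recall that with blocklength $m=n+\lceil\sqrt n\rceil$, one fixes
\[
x^{\star}=\arg\max_{x\in\setx_D} H\big(\mathbb{E}[W_S(\cdot|x,S)]\big),
\]
transmits ${x^{\star}}^n$ during the first $n$ symbols to generate common randomness via the feedback link, and then appends a $(\lceil\sqrt n\rceil,M,2^{-\sqrt n\delta})$ transmission codeword $\boldsymbol{u}_{F_i(y^n)}\in\setx_D^{\lceil\sqrt n\rceil}$ determined by the coloring functions $\{F_i\}_{i=1}^{N}$. Because every symbol actually transmitted lies in $\setx_D$, one has $d^{\star}(x_t)\leq D$ for every $t$, hence $d_t\leq D$ pointwise and a fortiori $\bar d^{\,n}\leq D$, so the average distortion constraint is satisfied.

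The ID error analysis from Section \ref{sec: direct JdIDAS} is unaffected by changing the distortion criterion: the bounds on the errors of the first and second kind in \eqref{eq:error2_Bound}, as well as the size estimate $N\geq (n+1)^{-2|\setx||\sety|}\cdot 2^{|\sett^n|\lambda\sqrt n\epsilon}$, hold identically because they depend only on the coding scheme and the channel statistics, not on which distortion metric is imposed. By Lemma \ref{lemma:typ2} one has $|\sett^n|\doteq 2^{n H(\mathbb{E}[W_S(\cdot|x^{\star},S)])}$, which produces the claimed ID rate $\max_{x\in\setx_D}H\big(\mathbb{E}[W_S(\cdot|x,S)]\big)$.

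There is no real obstacle: the argument is essentially a reduction from the weaker constraint to the stronger one. The only point worth flagging is that this establishes merely a lower bound, not a matching converse. A tight characterization under the average criterion would presumably allow feedback strategies mixing symbols both inside and outside $\setx_D$ across different time instants, enlarging the feasible set of input distributions beyond $\setp_D$; any attempt at a converse would need to control the joint behavior of such time-varying strategies, which is genuinely harder than the per-symbol case treated in Lemma \ref{lemma:K3} and is left outside the scope of Theorem \ref{theorem:capDist_avg}. \qed
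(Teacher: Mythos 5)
Your proposal is correct and follows essentially the same route as the paper: the paper likewise obtains Theorem \ref{theorem:capDist_avg} by reusing the achievability scheme of Section \ref{sec: direct JdIDAS} verbatim and observing that a code meeting the per-symbol constraint $d_t\leq D$ for all $t$ automatically satisfies $\bar d^{\,n}=\frac{1}{n}\sum_t d_t\leq D$. Your added remark that only the lower bound follows and that the converse is open also matches the paper's own discussion.
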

Despite the practical implications of this result, proving a converse theorem for this bound remains an open problem.

 \section{Conclusion and Discussion}
 In our work, we studied the problem of joint ID and channel state estimation over a DMC with i.i.d. state sequences, where the sender simultaneously sends an identity and senses the state via a strictly causal channel output. We established the capacity on the corresponding ID capacity-distortion function. It emerges that sensing can increase the ID capacity. In the proof of our theorem, we notice that the generation of common randomness is a key tool to achieve a high ID rate. The common randomness generation is helped by feedback. The ID rate can be further increased by adding local randomness at the sender.

Our framework closely mirrors the one described in \cite{MariPaper}, with the key distinction being that we utilize an identification scheme instead of the classical transmission scheme. We want to simultaneously identify the message sent and estimate the channel's state. As noted in the results of \cite{MariPaper}, the capacity-distortion function is consistently smaller than the transmission capacity of the state-dependent DMC, except when the distortion is infinite. This observation aligns with expectations for the message transmission scheme, as the optimization is performed over a constrained input set defined by the distortion function.
However, this does not directly apply to the ID scheme. An interesting aspect is that the capacity-distortion function for the deterministic encoding case scales double exponentially with the blocklength, as highlighted in Theorem \ref{theorem:capDist1}. However, the ID capacity of the state-dependent DMC with deterministic encoding scales only exponentially with the blocklength. This is because feedback significantly enhances the ID capacity, enabling a double-exponential growth of the ID capacity for the state-dependent DMC, as established in Theorem \ref{MainTheorem1}. This contrasts sharply with the message transmission scheme, where feedback does not increase the capacity of a DMC. Introducing an estimator into our framework naturally reduces the ID capacity compared to the scenario with feedback but without an estimator. This reduction occurs because the optimization is performed over a constrained input set defined by the distortion function. Nevertheless, the capacity-distortion function remains higher than in the case without feedback and without an estimator.
This difference underscores a unique characteristic of the ID scheme, highlighting its distinct scaling behavior and potential advantages in certain scenarios.

We consider two cases: deterministic and randomized identification. For a transmission system without sensing, it was shown in \cite{AhlDueck,deterministicDMC} that the number of messages grows exponentially, i.e., $N = 2^{nC_{ID}^d}$. 

Remarkably, Theorem~\ref{theorem:capDist1} demonstrates that by incorporating sensing, the growth rate of the number of messages becomes double exponential: $N = 2^{2^{nC_{ID}^d(D)}}$. This result is notable and closely parallels the findings on identification with feedback \cite{Idfeedback}.

In the case of randomized identification, Theorem~\ref{theorem:capDist2} shows that the capacity is also improved by incorporating sensing. However, in both the deterministic and randomized settings, the scaling remains double exponential.

One application of message identification is in control and alarm systems \cite{6Gcomm,6G_Book}. For instance, it has been shown that identification codes can be used for status monitoring in digital twins \cite{caspar}. Our results demonstrate that, in this context, incorporating a sensing process can significantly enhance the capacity.


Another potential application of our framework is molecular communication, where nanomachines use identification codes to determine when to perform specific actions, such as drug delivery \cite{Labidi2024}. In this context, sensing the position of the nanomachines can enhance the communication rate. 
For such scenarios, it is also essential to explore alternative channel models, like the Poisson channel.

Furthermore, it is clear that in other applications one have to consider different distortion functions. 

In the future, it would be interesting to apply the method to other distortion functions. Furthermore, in practical scenarios, there are models where the receiver either additionally or exclusively performs the sensing. This suggests the need to study various system models. For wireless communications, the Gaussian channel is more practical and widely applicable. Therefore, it would be also valuable to extend our results to the Gaussian case (JIDAS scheme with a Gaussian channel as the forward channel). It has been shown in \cite{isit_paper, identificationwithfeedback} that the ID capacity of the Gaussian channel with noiseless feedback is infinite. Interestingly, the ID capacity of the Gaussian channel with noiseless feedback remains infinite regardless of the scaling used for the rate, e.g., double exponential, triple exponential, etc. By introducing an estimator, we conjecture that the same results will hold, leading to infinite capacity-distortion function.
Considering scenarios with noisy feedback is more practical for future research.

 \label{sec:conclusions}


\section*{Acknowledgement}
The authors acknowledge the financial support by the Federal Ministry of Education and Research
of Germany (BMBF) in the programme of “Souverän. Digital. Vernetzt.”. Joint project 6G-life, project identification number: 16KISK002.
H. Boche and W. Labidi were further supported in part by the BMBF within the national initiative on Post Shannon Communication (NewCom) under
Grant 16KIS1003K and within the national initiative on moleculare communication (IoBNT) under the grant 16KIS1988. C.\ Deppe was further supported in part by the BMBF within the national initiative on Post Shannon Communication (NewCom) under Grant 16KIS1005. C. Deppe, W. Labidi and Y. Zhao were also supported by the DFG within the projects DE1915/2-1 and BO 1734/38-1. This work has been presented in part at the IEEE International Symposium on Information Theory 2023 (ISIT 2023) \cite{labidi2023joint}
\bibliographystyle{IEEEtran}
\bibliography{references}
\IEEEtriggeratref{4}

\end{document}